  \theoremstyle{plain}
  \newtheorem{theorem}{Theorem}
  \newtheorem{lemma}{Lemma}  
  \newtheorem{corollary}[theorem]{Corollary}  
  \newtheorem{observation}{Observation}
  \theoremstyle{definition}
  \newtheorem{definition}{Definition}
  \newtheorem{problem}{Problem}
\title{Indexing Weighted Sequences: Neat and Efficient}
\author[1]{Carl Barton}
\author[2]{Tomasz Kociumaka}
\author[3]{Chang Liu}
\author[3]{Solon P. Pissis}
\author[2,3]{Jakub Radoszewski}
\affil[1]{European Bioinformatics Institute, Wellcome Genome Campus, Hinxton, UK\\
    \texttt{carl@ebi.ac.uk}}
\affil[2]{Institute of Informatics, University of Warsaw, Warsaw, Poland\\
    \texttt{[kociumaka,jrad]@mimuw.edu.pl}}
\affil[3]{Department of Informatics, King's College London, London, UK\\
    \texttt{[chang.2.liu,solon.pissis]@kcl.ac.uk}
}
\date{\vspace{-5ex}}
  \tikzset{
    mynode/.style = {circle, draw, inner sep=1pt}
  }
    \def\dd{\mathinner{.\,.}}
    \renewcommand{\Pr}{\mathbb{P}}
  \newcommand{\floor}[1]{\left\lfloor #1 \right\rfloor}
  \newcommand{\ceil}[1]{\left\lceil #1 \right\rceil}
  \newcommand{\Oh}{\mathcal{O}}
  \def\dd{\mathinner{.\,.}}
  \newcommand{\Occpar}[1]{\mathit{Occ}_{#1}}
  \newcommand{\Occ}{\Occpar{\frac1z}}
  \newcommand{\OOcc}{\mathit{Occ}}
  \newcommand{\Count}{\mathit{Count}}
  \renewcommand{\P}{\mathit{Prob}}
  \newcommand{\MS}{\mathcal{M}}
\newcommand\twocol[2]{%
\begin{center}%
\begin{minipage}[t]{0.4\textwidth}%
\vspace{0pt}%
{#1}%
\end{minipage}\hfill%
\begin{minipage}[t]{0.6\textwidth}%
\vspace{0pt}%
{#2}%
\end{minipage}%
\end{center}}
  \newcommand{\fr}{$\frac1z$}
  \renewcommand{\SS}{\mathcal{S}}
    \newcommand{\lab}{\mathsf{L}}
\begin{document}
  \maketitle
  
\begin{abstract}
  In a {\em weighted sequence}, for every position of the sequence and every letter of the alphabet a probability 
  of occurrence of this letter at this position is specified.
  Weighted sequences are commonly used to represent imprecise or uncertain data,
  for example, in molecular biology where they are known under the name of Position-Weight Matrices.
  Given a probability threshold $\frac1z$, we say that a string $P$ of length $m$ occurs in a weighted sequence $X$ at position $i$
  if the product of probabilities of the letters of $P$ at positions $i,\ldots,i+m-1$ in $X$ is at least $\frac1z$. 
  In this article, we consider an \emph{indexing} variant of the problem, in which we are to preprocess a weighted sequence to answer
  multiple pattern matching queries.
  We present an $\Oh(nz)$-time construction of an $\Oh(nz)$-sized index for a weighted sequence of length $n$ over a constant-sized alphabet
  that answers pattern matching queries in optimal, $\Oh(m+\OOcc)$ time, where $\OOcc$ is the number of occurrences reported.
  The cornerstone of our data structure is a novel construction of a family of $\lfloor z \rfloor$ special strings
  that carries the information about all the strings that occur in the weighted sequence with a sufficient probability.
  We obtain a weighted index with the same complexities as in the most efficient previously known index by Barton et al.\ \cite{DBLP:conf/cpm/BartonKPR16},
  but our construction is significantly simpler.
  The most complex algorithmic tool required in the basic form of our index is the suffix tree
  which we use to develop a new, more straightforward index for the so-called property matching problem.
  We provide an implementation of our data structure.
  Our construction allows us also to obtain a significant improvement over the complexities of the approximate variant of the weighted index presented
  by Biswas et al.\ \cite{DBLP:conf/edbt/BiswasPTS16} and an improvement of the space complexity of their general index.
\end{abstract}

\section{Introduction} \label{sec:intro}
We consider a type of uncertain sequence called a {\em weighted sequence}.
In a weighted sequence every position contains a subset of the alphabet and every letter of the alphabet is associated with a probability of occurrence such that the sum of probabilities at each position equals 1.

Weighted sequences are common in a wide range of applications: (i) data measurements with imprecise sensor measurements; (ii) flexible sequence modelling, such as binding profiles of DNA sequences; 
(iii) observations that are private and thus sequences of observations may have artificial uncertainty introduced deliberately (see~\cite{DBLP:journals/tkde/AggarwalY09} for a survey).
Pattern matching (or substring matching) is a core operation in a wide variety of applications including genome assembly, computer virus detection, database search and short read alignment. Many of the applications of pattern matching generalise immediately to the weighted case as much of this data is more commonly uncertain (e.g.\ reads with quality scores) than certain. In particular probabilistic databases have been a very active area of research in recent years; see e.g.~\cite{DBLP:journals/cacm/DalviRS09}. 
A common assumption in practice is that the alphabet of weighted sequences is constant since the most commonly studied alphabet is $\Sigma=\{\mathtt{A},\mathtt{C},\mathtt{G},\mathtt{T}\}$.

In the \emph{Weighted Pattern Matching} (WPM) problem we are given a string $P$ called a pattern, a weighted sequence $X$ called a text, both over an alphabet $\Sigma$, and a \emph{threshold probability} \fr. The task is to find all positions $i$ in $X$ where the product of probabilities of the letters of $P$ at positions $i,\ldots,i+|P|-1$ in $X$ is at least $\frac1z$. Each such position is called an \emph{occurrence} of the pattern; we also say that the fragment and the pattern \emph{match}. 

In this article, we consider the indexing (or off-line) version of the WPM problem, called \emph{Weighted Indexing}. Here we are given a text being a weighted sequence and we are asked to construct a data structure (called an \emph{index}) to provide efficient operations for answering WPM queries related to the text.
We also consider other variants of the indexing problem.
In the {\em Approximate Weighted Indexing} problem, given a pattern and a threshold $z^{\prime}$, we are to report all occurrences of the pattern
with probability at least $\frac{1}{z^{\prime}}$ but we may also report additional occurrences with probability $\frac{1}{z^{\prime}} - \epsilon$,
for a pre-selected value of $\epsilon>0$.
In the {\em Generalised Weighted Indexing} problem we are to construct a data structure that allows for WPM queries to be answered for any threshold $z^{\prime}$ with $z^{\prime} \le z$.

A problem that is known to be closely related to the Weighted Indexing problem is \emph{Property Indexing}. 
In this problem, we are given a string $S$ called the text and a \emph{hereditary property} $\Pi$, which is a family of integer intervals contained in $\{1,\ldots,|S|\}$
(hereditary means that it is closed under subintervals). Our goal is to preprocess the text so that, for a query string $P$, we can report all occurrences of $P$ in $S$ which, interpreted as intervals, belong to $\Pi$.
The property $\Pi$ can be represented in $\Oh(|S|)$ space using an array $\pi[1\dd |S|]$ such the longest interval starting at position $i$ is $\{i,\ldots,\pi[i]\}$.

In each of the indexing problems, we denote the length of the text by $n$, the length of a query pattern by $m$, and the number of occurrences of the pattern in the text by $\OOcc$.

\subsection{Previous Results}
An $\Oh(n \log m)$-time solution for the Weighted Pattern Matching problem based on the Fast Fourier Transform was proposed in~\cite{KCL_publication,DBLP:journals/jcb/RajasekaranJS02}. Recently, an $\Oh(n \log z)$-time solution using the suffix array and lookahead scoring was presented in~\cite{DBLP:conf/isaac/KociumakaPR16}. The average case complexity of the WPM problem has also been studied and a number of fast algorithms have been shown with both linear~\cite{DBLP:journals/corr/BartonLP15} and sub-linear on average algorithms being presented~\cite{DBLP:conf/cocoa/BartonLP16}.

The Weighted Indexing problem was first considered by Iliopoulos et al.~\cite{costas_weighted_suffix_tree_j}, who introduced a data structure called {\em weighted suffix tree} allowing optimal $\Oh(m + \OOcc)$-time queries. The construction time and size of that data structure was, however, $\Oh(n |\Sigma|^{z \log z})$.

Amir et al.~\cite{amir_weighted_property_matching_j} reduced the Weighted Indexing problem to the Property Indexing problem in a text of length $\Oh(nz^2 \log z)$. For the latter, they proposed a solution with $O(n \log \log n)$ preprocessing time and optimal $O(m+\OOcc)$ query time. Later it was shown that the Property Indexing problem can be solved in linear time; see \cite{DBLP:journals/ipl/IliopoulosR08,DBLP:journals/ipl/JuanLW09} (see also \cite{DBLP:journals/tcs/Kopelowitz16}). This lead to a solution to the Weighted Indexing problem with index size and construction time $\Oh(n z^2 \log z)$, preserving optimal query time.

These results were recently improved by some of the authors in~\cite{DBLP:conf/cpm/BartonKPR16}, where they proposed an $\Oh(nz)$-sized data structure for the Weighted Indexing problem that can be constructed also in $\Oh(nz)$ time. The query time is still $O(m+\OOcc)$. The authors proposed several applications of their index.

Biswas et al.~\cite{DBLP:conf/edbt/BiswasPTS16} presented a data structure that solves the Approximate Weighted Indexing problem in $\Oh(\frac{1}{\epsilon}nz^2)$ space (with $\Omega(\frac{1}{\epsilon}n^2z^2)$ construction time) with $\Oh(m + \OOcc)$-time queries; here $\OOcc$ denotes the number of occurrences reported.
They also proposed a data structure for the Generalised Weighted Indexing problem
with $\Oh(nz^2 \log z)$ space and $\Oh(m + m \cdot \OOcc)$ query time.
The construction time is not mentioned, but a direct construction of their index works in $\Omega(n^2z^2)$ time.
Moreover, they also consider the problem of document listing for weighted sequences.

\subsection{Our Contribution}
We present a new $\Oh(nz)$-time construction of an $\Oh(nz)$-sized data structure for the Weighted Indexing problem that answers queries in optimal $\Oh(m+\OOcc)$ time.
Our index is based on a novel observation that one can always construct a family of $\lfloor z \rfloor$ special strings of length $n$ that carries all the information
about all the strings that occur in the weighted sequence.
This yields a significantly simpler construction than in the previous index \cite{DBLP:conf/cpm/BartonKPR16} preserving all of its applications.
As a by product, we obtain an optimal solution to the Property Indexing problem that avoids complex tools used in the previous solutions~\cite{amir_weighted_property_matching_j,DBLP:journals/ipl/IliopoulosR08,DBLP:journals/ipl/JuanLW09,DBLP:journals/tcs/Kopelowitz16}.
We provide a proof-of-concept implementation of our index that was validated for correctness and efficiency.
We also discuss an even simpler randomised construction with worse space complexity and construction time of the index.

Our approach lets us significantly improve upon the variants of the weighted index proposed in~\cite{DBLP:conf/edbt/BiswasPTS16}.
In the Approximate Weighted Indexing problem, we obtain $\Oh(\frac{n}{\epsilon})$ space and $\Oh(\frac{n}{\epsilon} \log{\frac{n}{\epsilon}})$ construction time, preserving the query time.
We also improve the space usage in the Generalised Weighted Indexing problem to $\Oh(nz)$, also in the document listing variant.

\subsection{Comparison of Our Techniques with the Previous Work}
Two main building blocks of our weighted index are a construction of a family of $\floor{z}$ special strings with properties
and a solution to the Property Indexing problem.

The family of strings that we construct has the same set of patterns occurring at each position as the weighted text $X$ and, moreover, the number
of occurrences of each pattern at each position is a good estimate of the probability of its occurrence at this position in $X$.
The former property is used in the construction of a weighted index and the latter in the construction of an approximate weighted index.
The existence of this family is not immediate.
However, its proof not involved and we design a $\Oh(nz)$-time elementary construction algorithm based on \emph{tries} (also known as radix trees).
In the end we show that a simple generation of a number of strings according to the probability distribution
implied by the weighted sequence with high probability yields a family of strings that also well describes the set of patterns in $X$.
However, the number of strings that one needs to generate is much larger.
Excluding the previous, exponential-size index of Iliopoulos et al.~\cite{costas_weighted_suffix_tree_j},
previous work includes the $\Oh(nz^2 \log z)$-space index of Amir et al.~\cite{amir_weighted_property_matching_j}
and $\Oh(nz)$-space index by Barton et al.~\cite{DBLP:conf/cpm/BartonKPR16}.
Amir et al.~\cite{amir_weighted_property_matching_j} show that, after a small modification of the weighted sequence, the set of maximal string patterns
that occur in it has a total length $\Oh(nz^2 \log z)$.
Barton et al.~\cite{DBLP:conf/cpm/BartonKPR16} show a representation of this set as a trie
and apply Shibuya's algorithm for suffix tree of a trie construction~\cite{DBLP:conf/isaac/Shibuya99}.

In our solution to the Property Indexing problem we construct a data structure called \emph{property suffix tree}
being the suffix tree in which the nodes corresponding to factors that do not belong to the property are trimmed.
The algorithm makes only several traversals of the suffix tree and uses an amortisation argument similar to the one from
Ukkonen's suffix tree construction~\cite{DBLP:journals/algorithmica/Ukkonen95}.
Very similar data structures were constructed by Amir et al.~\cite{amir_weighted_property_matching_j} and Kopelowitz~\cite{DBLP:journals/tcs/Kopelowitz16}.
Amir et al.~\cite{amir_weighted_property_matching_j} use a heavy machinery of weighted ancestor queries
and a fancy algorithm to mark the properties on edges of the suffix tree.
Kopelowitz~\cite{DBLP:journals/tcs/Kopelowitz16} designs an algorithm for a dynamic setting, but also mentions its static application.
He uses amortisation ideas similar to ours, but his construction is more involved due to its generality and also utilises less basic
longest common extension queries (i.e., range minimum queries).
The solution to the Property Indexing problem that was developed by Iliopoulos et al.~\cite{DBLP:journals/ipl/IliopoulosR08}
and clarified by Juan et al.~\cite{DBLP:journals/ipl/JuanLW09} constructs a different data structure that, in a sense,
shifts the hardness of the problem from the construction to the queries.
It also requires range minimum queries.

Our techniques enable us immediately to answer decision queries of a weighted index.
To answer counting and reporting queries in optimal time, we require
coloured range counting and reporting data structures in the property suffix tree
that were already used for this purpose by Barton et al.~\cite{DBLP:conf/cpm/BartonKPR16}.
In our solution to the Approximate Weighted Indexing, we need to augment the property suffix tree with
a data structure for top-$k$ document retrieval queries.
The same type of queries were used in the previous solution by Biswas et al.~\cite{DBLP:conf/edbt/BiswasPTS16},
however, not as a black box.
Moreover, they also use the less efficient reduction of~\cite{amir_weighted_property_matching_j}
which caused their data structure to use $\Oh(\frac{1}{\epsilon}nz^2)$ space, assuming that $z^{\prime} \le z$
in each query.
Finally, we improve the space complexity of the generalised weighted index of Biswas et al.~\cite{DBLP:conf/edbt/BiswasPTS16}
by plugging in our construction of $\floor{z}$ special strings.

\subsection{Structure of the Paper}
In Section~\ref{sec:z} we present a combinatorial construction of the special family of $\floor{z}$ strings.
An efficient implementation of the construction of this family based on tries is proposed in Section~\ref{sec:constant}.
In Section~\ref{sec:index} the new optimal solution for the Property Indexing problem is presented.
Using the construction and the property index, we obtain our weighted index in Section~\ref{sec:appl}
and, with the aid of an auxiliary tool, an approximate weighted index in Section~\ref{sec:appl2}.
Alternative randomised constructions of the two indexes with worse parameters are discussed in Section~\ref{sec:prob}.
Our improvement to the Generalised Weighted Index and our C++ implementation are briefly discussed Section~\ref{sec:concl}.

  \section{Preliminaries}\label{sec:prelim}
  \subsection{Strings and Property Indexing}
  A \emph{string} $S$ over an alphabet $\Sigma$ is a finite sequence of letters from $\Sigma$.
  By $n=|S|$ we denote the length of $S$ and by $S[i]$, for $1 \le i \le n$, we denote the $i$-th letter of $S$.
  By $S[i{\dd}j]$ we denote the string $S[i] \ldots S[j]$ called a \emph{factor} of $S$ (if $i>j$, then the factor is an empty string).
  A factor is called a \emph{prefix} if $i=1$ and a \emph{suffix} if $j=n$.
  We say that a string $P$ \emph{occurs} at position $i$ in $S$ if $P=S[i \dd i+|P|-1]$.

  A \emph{property} $\Pi$ of $S$ is a hereditary collection of integer intervals contained in $\{1,\ldots,n\}$.
  For simplicity, we represent every property $\Pi$ with an array $\pi[1\dd |S|]$ such that the longest interval $I\in \Pi$
  starting at position $i$ is $\{i,\ldots,\pi[i]\}$.
  Observe that $\pi$ can be an arbitrary array satisfying $\pi[i] \in \{i-1,\ldots,n\}$  and $\pi[1] \le \pi[2] \le \cdots \le \pi[n]$.
  For a string $P$, by $\OOcc_\pi(P,S)$ we denote the set of occurrences $i$ of $P$ in $S$ such that $i+|P|-1 \le \pi[i]$.
  These notions lead us to the statement of the following problem.

  \begin{problem}[Property Indexing]~\\
    \textbf{Input:} A string $S$ of length $n$ over an alphabet $\Sigma$ and an array $\pi$ representing a property $\Pi$.\\
    \textbf{Queries:} For a given pattern string $P$ of length $m$, compute $|\OOcc_\pi(P,S)|$ or report all elements of $\OOcc_\pi(P,S)$.
  \end{problem}
  Let us consider an indexed family $\SS=(S_j,\pi_j)_{j=1}^k$ of strings $S_j$ with properties $\pi_j$.
  For a string $P$ and an index $i$, by
  $$\Count_\SS(P,i)=|\{j:i \in \OOcc_{\pi_j}(P,S_j)\}|$$
  we denote the total number of occurrences of $P$ at the position $i$
  in the strings $S_1,\ldots,S_k$ that respect the properties.

  \subsection{Weighted Sequences and Weighted Indexing}
  A weighted sequence $X=x_1x_2 \ldots x_n$ of length $|X|=n$ over an alphabet $\Sigma$
  is a sequence of sets of pairs of the form
  $x_i = \{(c,\ p^{(X)}_i(c))\ :\ c \in \Sigma\}$.
  Here, $p^{(X)}_i(c)$ is the occurrence probability of the letter $c$ at the position $i \in \{1,\ldots,n\}$.
  These values are non-negative and sum up to 1 for a given $i$.
  An example of a weighted sequence is shown in Table~\ref{tab:WS}.

  \begin{table}[htpb]
      \begin{center}
      \begin{tabular}{|r|p{0.4cm}|p{0.4cm}|p{0.4cm}|p{0.4cm}|p{0.4cm}|p{0.4cm}|}
        \hline
        $i$ &1&2&3&4&5&6\\\hline
        $p_i^{(X)}(\mathtt{A})$ & 1 & $\tfrac12$ & $\tfrac34$ & $\tfrac45$ & $\tfrac12$ & $\tfrac14$ \\\hline
        $p_i^{(X)}(\mathtt{B})$ & 0 & $\tfrac12$ & $\tfrac14$ & $\tfrac15$ & $\tfrac12$ & $\tfrac34$ \\\hline
      \end{tabular}
      \end{center}
    \caption{
      A weighted sequence $X$ of length 6 over $\Sigma=\{\mathtt{A},\mathtt{B}\}$.
        \label{tab:WS}
    }
  \end{table}
  
  The \emph{probability of matching} of a string $P$ at position $i$ of a weighted sequence $X$ equals
  $$\P_X(P,i) = \prod_{j=1}^{|P|} p^{(X)}_{i+j-1}(P[j]).$$
  We say that a string $P$ \emph{occurs} in $X$ at position $i$ if $\P_X(P,i) \ge \frac1z$.
  We also say that $P$ is a \emph{solid factor} of $X$ (starting, occurring) at position $i$.
  By $\Occ(P,X)$ we denote the set of all positions where $P$ occurs in $X$.
  The main problem in scope can be formulated as follows.
  \begin{problem}[Weighted Indexing]~\\
    \textbf{Input:} A weighted sequence $X$ of length $n$ over an alphabet $\Sigma$ and a threshold \fr.\\
    \textbf{Queries:} For a given pattern string $P$ of length $m$, check if $\Occ(P,X) \ne \emptyset$ (\emph{decision} query),
    compute $|\Occ(P,X)|$ (\emph{counting} query), or report all elements of $\Occ(P,X)$ (\emph{reporting} query).
  \end{problem}

  \paragraph{\bf Our model of computations.}
  We assume the word-RAM model with word size $w = \Omega(\log(nz))$.
  We consider the log-probability model of representations of weighted sequences in which probabilities can be multiplied exactly in $\Oh(1)$ time.
  We further assume that $|\Sigma| = \Oh(1)$;
  under this assumption a weighted sequence of length $n$ has a representation using $\Oh(n)$ space.

\section{Existence of an Equivalent Family of Strings}\label{sec:z}
In the definition below, we formalise the property of a string family that we aim to construct.
\begin{definition}\label{def:estim}
  We say that an indexed family $\SS=(S_j,\pi_j)_{j=1}^{\floor{z}}$ containing strings $S_j$ of length $n$
  is a \emph{$z$-estimation} of a weighted sequence $X$ of length $n$ if and only if, for every string $P$ and position $i \in \{1,\ldots,n\}$,
  $\Count_\SS(P,i) = \floor{\P_X(P,i)z}$.
\end{definition}
Note that a $z$-estimation $\SS$ of a weighted sequence $X$ carries the information about all solid factors of $X$:
a string $P$ occurs in $X$ at position $i$ if and only if it occurs at position $i$
in at least one of the strings $S_j$ respecting its property $\pi_j$.
This observation will be used in the construction of our weighted index.
Moreover, the value $\Count_\SS(P,i)$ provides a good estimation of the probability $\P_X(P,i)$:
$$\tfrac{1}{z} \Count_\SS(P,i) \le \P_X(P,i) < \tfrac{1}{z} \Count_\SS(P,i) + \tfrac{1}{z}.$$
This will let us design an approximate weighted index.
An example of a $z$-estimation is shown in Table~\ref{tab:S}.

  \begin{table}[htpb]
  \twocol{
      \begin{center}
      \begin{tabular}{|r|l|l|l|l|l|l|}
        \hline
        $i$ &1&2&3&4&5&6\\\hline\hline
        $S_1[i]$ & $\mathtt{A}$ & $\mathtt{A}$ & $\mathtt{A}$ & $\mathtt{A}$ & $\mathtt{A}$ & $\mathtt{A}$ \\\hline
        $\pi_1[i]$ & 2 & 2 & 3 & 4 & 5 & 6 \\\hline\hline
        
        $S_2[i]$ & $\mathtt{A}$ & $\mathtt{A}$ & $\mathtt{A}$ & $\mathtt{A}$ & $\mathtt{A}$ & $\mathtt{B}$ \\\hline
        $\pi_2[i]$ & 4 & 4 & 5 & 6 & 6 & 6 \\\hline\hline
                
        $S_3[i]$ & $\mathtt{A}$ & $\mathtt{B}$ & $\mathtt{A}$ & $\mathtt{A}$ & $\mathtt{B}$ & $\mathtt{B}$ \\\hline
        $\pi_3[i]$ & 4 & 4 & 5 & 6 & 6 & 6 \\\hline\hline

        $S_4[i]$ & $\mathtt{A}$ & $\mathtt{B}$ & $\mathtt{B}$ & $\mathtt{B}$ & $\mathtt{B}$ & $\mathtt{B}$ \\\hline
        $\pi_4[i]$ & 2 & 2 & 3 & 3 & 5 & 6 \\\hline
      \end{tabular}
      \end{center}
  }{
      \begin{center}
      \begin{tabular}{c|c|c}
        string $P$ & $\P_X(P,3)$ & $\{j : 3 \in \OOcc_{\pi_j}(P,S_j)\}$ \\\hline
        $\varepsilon$ & 1 & 1, 2, 3, 4 \\\hline
        $\mathtt{A}$ & $0.75$ & 1, 2, 3 \\\hline
        $\mathtt{AA}$ & $0.6$ & 2, 3 \\\hline
        $\mathtt{AAA}$ & $0.3$ & 2 \\\hline
        $\mathtt{AAB}$ & $0.3$ & 3 \\\hline
        $\mathtt{B}$ & $0.25$ & 4 \\\hline
      \end{tabular}
      \end{center}
      }
      \caption{
        To the left: a 4-estimation $\SS$ of the weighted sequence $X$ from Table~\ref{tab:WS}.
        To the right: all the strings that occur at position $i=3$ in $X$
        together with the probabilities of occurrence in $X$ and occurrences in $\SS$.
      \label{tab:S}
      }
  \end{table}

Below, we prove existence of a $z$-estimation. An efficient construction is deferred to the next section.

For a fixed weighted sequence $X$ of length $n$ and a threshold $z$, we can use compact notation:
$$t_i(P)=\floor{\P_X(P,i)z} \quad\mbox{and}\quad m_i(P)=t_i(P)-\sum_{c\in \Sigma} t_i(Pc)$$
for $i=1,\ldots,n$.
We start with an equivalent characterisation of $z$-estimations of $X$.
\begin{observation}\label{obs:ms}
A family $\SS=(S_j,\pi_j)_{j=1}^{\floor{z}}$ is a $z$-estimation of $X$
if and only if for each position $i$, every string $P$ is a prefix of exactly $t_i(P)$ strings $S_j[i\dd \pi_j[i]]$.
\end{observation}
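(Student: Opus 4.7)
The plan is to prove the observation as a direct chain of equivalences obtained by unpacking the definitions. The key bridge is to identify the condition ``$P$ occurs at position $i$ in $S_j$ and respects $\pi_j$'' with the purely combinatorial condition ``$P$ is a prefix of $S_j[i\dd \pi_j[i]]$''. Once that bridge is in place, the statement of the observation reduces to rewriting $\Count_\SS(P,i)$ using the prefix formulation.

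First I would record the following elementary fact: for every string $P$, position $i$, and index $j$,
\[
  i \in \OOcc_{\pi_j}(P,S_j) \iff P \text{ is a prefix of } S_j[i\dd \pi_j[i]].
\]
By the definition of $\OOcc_{\pi_j}$ given in Section~\ref{sec:prelim}, the left-hand side asserts both $S_j[i\dd i+|P|-1]=P$ and $i+|P|-1 \le \pi_j[i]$, which is exactly what it means for $P$ to occur as a prefix of the factor $S_j[i\dd \pi_j[i]]$. The boundary cases (empty pattern, or $\pi_j[i]=i-1$) are handled by the same logic, since then $S_j[i\dd \pi_j[i]]$ is the empty string and only $P=\varepsilon$ qualifies as a prefix.

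Then I would combine this with the definition of $\Count_\SS$ to obtain
\[
  \Count_\SS(P,i) = |\{j : i \in \OOcc_{\pi_j}(P,S_j)\}| = |\{j : P \text{ is a prefix of } S_j[i\dd \pi_j[i]]\}|.
\]
By Definition~\ref{def:estim}, $\SS$ is a $z$-estimation of $X$ if and only if the left-hand side equals $t_i(P)=\floor{\P_X(P,i)z}$ for every $P$ and $i$, which by the above equality is literally the statement that every string $P$ is a prefix of exactly $t_i(P)$ of the truncated factors $S_j[i\dd \pi_j[i]]$. This closes the if-and-only-if.

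There is no genuine obstacle here: the observation is a definition-unfolding lemma whose whole point is to rephrase the $z$-estimation property in a form (prefix counts) that will be usable for the combinatorial construction in Section~\ref{sec:z}. The only small care required is to check the equivalence between $\OOcc_{\pi_j}$-membership and prefix-membership, but this is immediate from the preliminary definitions.
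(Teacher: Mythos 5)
Your proposal is correct and matches the paper's intent exactly: the paper states this observation without proof precisely because it is the definition-unfolding you carry out, identifying $i \in \OOcc_{\pi_j}(P,S_j)$ with ``$P$ is a prefix of $S_j[i\dd \pi_j[i]]$'' and then rewriting $\Count_\SS(P,i)$ accordingly. Your handling of the boundary cases (empty pattern, $\pi_j[i]=i-1$) is fine and completes the argument.
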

Next, we prove that this condition uniquely defines the multiset $\{S_j[i\dd \pi_j[i]] : 1\le j \le \floor{z}\}$.
\begin{lemma}
There exists a unique multiset $\MS_i$ such that each string $P$ is a prefix of exactly $t_i(P)$ strings in $\MS_i$.
\end{lemma}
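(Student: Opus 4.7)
For a multiset $\MS_i$ of strings, let $\mu(P)$ denote the multiplicity of $P$ in $\MS_i$. The condition in the lemma is equivalent to the linear system
\[
\sum_{Q \text{ with prefix } P} \mu(Q) \;=\; t_i(P) \quad \text{for every string } P.
\]
My plan is first to deduce uniqueness from this system, which pins down a single candidate multiplicity function, and then to verify that this candidate actually defines a valid multiset.

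For uniqueness, I would split the sum on the left-hand side according to whether $Q=P$ or $Q$ has some $Pc$ as a prefix (with $c\in\Sigma$). The equations for the extensions $Pc$, summed over $c$, account for every $Q \ne P$ exactly once, so subtracting them from the equation for $P$ gives
\[
\mu(P) \;=\; t_i(P) - \sum_{c\in\Sigma} t_i(Pc) \;=\; m_i(P).
\]
Thus any valid $\MS_i$ must assign multiplicity $m_i(P)$ to every string $P$.

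For existence, I would take $\MS_i$ defined by $\mu(P) := m_i(P)$ and check two things: that these multiplicities are non-negative, and that they satisfy the prefix-sum identity. Non-negativity is the only real inequality. When $i+|P|\le n$, we have $\sum_{c\in\Sigma}\P_X(Pc,i) = \P_X(P,i)$ because the per-position probabilities sum to $1$; the elementary bound $\sum_c\floor{x_c}\le\floor{\sum_c x_c}$ (the left side is an integer not exceeding $\sum_c x_c$) applied with $x_c = z\P_X(Pc,i)$ yields $\sum_c t_i(Pc)\le t_i(P)$, i.e.\ $m_i(P)\ge 0$. When $|P|\ge n-i+1$, all $t_i(Pc)$ vanish, so $m_i(P)=t_i(P)\ge 0$ trivially.

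To verify the prefix-sum identity, I would telescope over the (finite) trie of strings. A short induction on $k$ yields
\[
\sum_{\substack{Q:\ P\text{ prefix of }Q\\|Q|\le k}} m_i(Q) \;=\; t_i(P)\;-\sum_{\substack{Q:\ P\text{ prefix of }Q\\|Q|=k+1}} t_i(Q),
\]
and taking $k=n-i+1$ kills the correction term since $t_i(Q)=0$ whenever $|Q|>n-i+1$. This gives the required equality; specialising to $P=\varepsilon$ also confirms $|\MS_i|=t_i(\varepsilon)=\floor{z}$. The only subtle step in the whole argument is the floor inequality used to establish $m_i(P)\ge 0$; the rest is bookkeeping on a trie whose depth is bounded by $n-i+1$.
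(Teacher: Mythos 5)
Your proposal is correct and follows essentially the same route as the paper: uniqueness by subtracting the counts for the one-letter extensions $Pc$ to force $\mu(P)=m_i(P)$, and existence by taking multiplicities $m_i(P)$, using superadditivity of $x\mapsto\floor{xz}$ (your floor inequality) for non-negativity, and an induction over string lengths (your telescoping is just the paper's downward induction in reverse) to verify the prefix-count condition. No gaps; the bookkeeping details you add (finiteness of the multiset, the boundary case $|P|\ge n-i+1$) are consistent with the paper's argument.
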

\begin{proof}
Consider a multiset $\MS_i$ satisfying the required condition and an arbitrary string $P$.
For each $c\in \Sigma$, there are $t_i(Pc)$ strings in $\MS_i$ with the prefix $P$ is followed by a character $c$.
In the remaining $t_i(P)-\sum_{c\in \Sigma} t_i(Pc)$ strings in $\MS_i$, the prefix $P$ it is not followed by any letter.
Thus, the multiplicity of $P$ in $\MS_i$ must be $m_i(P)$. This implies uniqueness of $\MS_i$.

Observe that $t_i(P)\ge \sum_{c\in \Sigma} t_i(Pc)$, because  $\P_X(P,i)\ge \sum_{c\in \Sigma} \P_X(Pc,i)$ and the function $x \mapsto \floor{xz}$ is superadditive.
Consequently, we may define a multiset $\MS_i$ using values $m_i(P)$ as multiplicities. It remains to prove that this multiset satisfies the required condition.
For this, we consider strings $P$ in the order of decreasing lengths. The base case is trivial because strings $P$ longer than $X$ 
satisfy $\P_X(P,i)=0$.
The inductive hypothesis yields that, for each $c\in \Sigma$, the string $Pc$ is a prefix of
$t_i(Pc)$ strings in $\MS_i$. Consequently, the string $P$ is a prefix of $m_i(P)+\sum_{c\in \Sigma} t_i(Pc) = t_i(P)$ strings in $\MS_i$, as claimed.
\end{proof}

Observe that in a $z$-estimation, $S_{j}[i\dd \pi_j[i]]$ can be obtained from $S_{j}[i+1\dd \pi_j[i+1]]$
by inserting a leading character and dropping some number of trailing characters.
This statement holds if only $\pi_j[i] \ge i$; otherwise $S_{j}[i\dd \pi_j[i]]=\varepsilon$.
The relation between these strings can be formalised as follows:
\begin{definition}
We say that $P\in \MS_i$ is \emph{compatible} with $Q\in \MS_{i+1}$ if $P=\varepsilon$ or $P=cQ'$ for some character $c\in \Sigma$
and a prefix $Q'$ of $Q$.
\end{definition}

Thus, if a $z$-estimation exists, it yields a perfect matching between $\MS_{i+1}$ and $\MS_{i}$
such that the matched strings are compatible.
We prove that such a matching exists unconditionally.
For an example, see Table~\ref{tab:match}.

  \begin{table}[htpb]
      \begin{center}
      \begin{tabular}{ccccccccccc}
        $\MS_1$ && $\MS_2$ && $\MS_3$ && $\MS_4$ && $\MS_5$ && $\MS_6$\\
        $\mathtt{\underline{A}A}$ & --- & $\mathtt{\underline{A}}$ & --- & $\mathtt{\underline{A}}$ & --- & $\mathtt{\underline{A}}$ & --- & $\mathtt{\underline{A}}$ & --- & $\mathtt{\underline{A}}$ \\
        $\mathtt{\underline{A}AAA}$ & --- & $\mathtt{\underline{A}AA}$ & --- & $\mathtt{\underline{A}AA}$ & --- & $\mathtt{\underline{A}AB}$ & --- & $\mathtt{\underline{A}B}$ & --- & $\mathtt{\underline{B}}$ \\
        $\mathtt{\underline{A}BAA}$ & --- & $\mathtt{\underline{B}AA}$ & --- & $\mathtt{\underline{A}AB}$ & --- & $\mathtt{\underline{A}BB}$ & --- & $\mathtt{\underline{B}B}$ & --- & $\mathtt{\underline{B}}$ \\
        $\mathtt{\underline{A}B}$ & --- & $\mathtt{\underline{B}}$ & --- & $\mathtt{\underline{B}}$ & --- & $\varepsilon$ & --- & $\mathtt{\underline{B}}$ & --- & $\mathtt{\underline{B}}$
      \end{tabular}
      \end{center}
      \caption{
        The sets $\MS_i$ for the weighted sequence $X$ from Table~\ref{tab:WS} with $z=4$.
        Perfect matchings of compatible strings between $\MS_i$ and $\MS_{i+1}$ are marked.
        The first letters of the strings form the 4-estimation from Table~\ref{tab:S}
        and the length of the $j$-th string in $\MS_i$ corresponds to $\pi_j[i]-i+1$.
      \label{tab:match}
      }
  \end{table}

\begin{lemma}\label{lem:compat}
For every $1 \le i \le n-1$, there is a one-to-one correspondence from $\MS_{i+1}$ into $\MS_i$ such that each $Q\in \MS_{i+1}$ is matched with a compatible $P\in \MS_i$.
\end{lemma}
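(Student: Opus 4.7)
The plan is to establish the one-to-one correspondence via Hall's marriage theorem. Model the situation as a bipartite (multi)graph whose parts are $\MS_{i+1}$ and $\MS_i$ (each multiset copy being a distinct vertex) and whose edges connect each $Q\in \MS_{i+1}$ to each $P\in \MS_i$ compatible with it. Since $|\MS_i|=|\MS_{i+1}|=\floor{z}$, a perfect matching (equivalently, the desired correspondence) exists provided $|N(T)|\ge |T|$ for every sub-multiset $T\sub \MS_{i+1}$, where $N(T)\sub \MS_i$ denotes the neighbourhood.

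First I would reduce to a convenient form of $T$. Compatibility of $P$ with $Q$ depends on $Q$ only through the set of prefixes of $Q$; hence, letting $\mathcal{T}\sub \Sigma^*$ be the prefix closure of the strings occurring in $T$, enlarging $T$ to contain all copies in $\MS_{i+1}$ of every string in $\mathcal{T}$ preserves $N(T)$ while only increasing $|T|$. It therefore suffices to prove
\[
m_i(\varepsilon) + \sum_{c\in \Sigma}\sum_{R\in \mathcal{T}} m_i(cR) \;\ge\; \sum_{R\in \mathcal{T}} m_{i+1}(R)
\]
for every finite non-empty prefix-closed $\mathcal{T}\sub \Sigma^*$.

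The next step is the telescoping identity
\[
\sum_{R\in \mathcal{U}} m(R) \;=\; t(\varepsilon) \;-\; \sum_{\substack{R\in \mathcal{U},\, c\in \Sigma \\ Rc\notin \mathcal{U}}} t(Rc),
\]
valid for any finite prefix-closed $\mathcal{U}\sub \Sigma^*$ containing $\varepsilon$; it follows by a straightforward induction on $|\mathcal{U}|$ from $m(R)=t(R)-\sum_c t(Rc)$. I would apply this identity to $\mathcal{T}$ with $t_{i+1}$, and to $\mathcal{T}':=\{\varepsilon\}\cup\{cR : c\in \Sigma,\, R\in \mathcal{T}\}$ with $t_i$, observing that $\mathcal{T}'$ is prefix-closed and its border equals $\{(cR,c'): c\in \Sigma,\, R\in \mathcal{T},\, Rc'\notin \mathcal{T}\}$. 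After cancelling $t_i(\varepsilon)=t_{i+1}(\varepsilon)=\floor{z}$, the inequality reduces to the pointwise bound
\[
\sum_{c\in \Sigma} t_i(cR'') \;\le\; t_{i+1}(R'')\qquad\text{for every } R''=Rc' \text{ with } R\in \mathcal{T},\, Rc'\notin \mathcal{T}.
\]

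Finally, since $t_i(cR'')=\floor{p_i^{(X)}(c)\,\P_X(R'',i+1)\,z}$ and $\sum_{c\in \Sigma} p_i^{(X)}(c)\,\P_X(R'',i+1)\,z = \P_X(R'',i+1)\,z$, this pointwise bound follows immediately from the superadditivity of $x\mapsto \floor{xz}$---precisely the property invoked in the preceding lemma. The main obstacle is not any single calculation but the bookkeeping: correctly defining the companion prefix-closed set $\mathcal{T}'$ and matching up its border edges with those of $\mathcal{T}$; once this is set up, the telescoping identity and the final pointwise estimate are routine.
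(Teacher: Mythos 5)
Your proof is correct, but it takes a genuinely different route from the paper. The paper argues constructively: it greedily matches each $Q\in\MS_{i+1}$ to the longest still-unmatched compatible $P\in\MS_i$, and then derives a contradiction from an unmatched $P=cQ'$ by comparing the multisets $\mathcal{L}$ (strings of $\MS_i$ starting with $c'Q'$ for some $c'$) and $\mathcal{R}$ (strings of $\MS_{i+1}$ starting with $Q'$), using $|\mathcal{L}|=\sum_{c'}t_i(c'Q')\le t_{i+1}(Q')=|\mathcal{R}|$. You instead verify Hall's condition: the reduction to prefix-closed families $\mathcal{T}$ is sound (compatibility with $Q$ depends only on the prefixes of $Q$, so enlarging $T$ to all copies of its prefix closure leaves $N(T)$ unchanged), your telescoping identity over prefix-closed sets is valid, the border of $\mathcal{T}'=\{\varepsilon\}\cup\{cR: c\in\Sigma, R\in\mathcal{T}\}$ is as you describe, and the final pointwise bound $\sum_c t_i(cR'')\le t_{i+1}(R'')$ is exactly the same superadditivity inequality the paper uses; one small point you pass over quickly is that every $cR$ with $R\in\mathcal{T}$ really does lie in $N(T)$ (each $R\in\mathcal{T}$ is a prefix of some original element of $T$), which is needed for the lower bound on $|N(T)|$, but this is immediate. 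The trade-off: your Hall-based argument packages the whole proof into a single counting inequality, which is arguably tidier as a pure existence statement, but it is non-constructive, whereas the paper's greedy matching is precisely what the $\Oh(nz)$-time token-moving algorithm of Section~\ref{sec:constant} simulates, so the paper's formulation does double duty for the algorithmic correctness proof.
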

\begin{proof}
We greedily transform each $Q\in \MS_{i+1}$ into the longest compatible $P\in \MS_{i}$ which is still unmatched.
If no compatible $P\in \MS_{i}$ is available, we leave $Q$ unmatched.
We will show that all strings $Q\in \MS_{i+1}$ are actually matched at the end of this process.
Since $|\MS_{i}|=t_i(\varepsilon)=\floor{z}=t_{i+1}(\varepsilon)=|\MS_{i+1}|$, it suffices to prove that no $P\in \MS_i$ is left unmatched.

An empty string $P\in \MS_i$ is compatible with every $Q\in \MS_{i+1}$, so it cannot be left unmatched.
Thus, suppose that $P=cQ' \in \MS_{i}$, for some $c \in \Sigma$ and string $Q'$, is left unmatched.
Let us denote by $\mathcal{R}$ the multiset containing all strings $Q\in \MS_{i+1}$ compatible with $P$, i.e., starting with $Q'$.
We further define $\mathcal{L}$ as the multiset containing all strings $P'\in \MS_{i}$ that start with $c'Q'$ for some $c' \in \Sigma$.
The construction procedure guarantees that each $Q\in \mathcal{R}$ has been matched to
a compatible $P'$ satisfying $|P'|\ge |P|$; such $P'$ must belong to the multiset $\mathcal{L}$.

Observe that $|\mathcal{L}| = \sum_{c'\in \Sigma} t_i(c'Q') \le t_{i+1}(Q')= |\mathcal{R}|$ because $\P_X(Q',i+1)\ge \sum_{c'\in \Sigma} \P_X(c'Q',i)$ and the function $x \mapsto \floor{xz}$ is superadditive. Consequently, each $P'\in \mathcal{L}$ must be matched to some $Q\in \mathcal{R}$.
Since $P\in \mathcal{L}$ is unmatched, we obtain a contradiction.
\end{proof}

Due to \cref{lem:compat}, we can index the strings $\MS_{i}=\{P_{j,i} : 1\le j \le \floor{z}\}$ so
that we have $\floor{z}$ chains $P_{j,1},\ldots,P_{j,n},P_{j,n+1}=\varepsilon$ with compatible subsequent strings.
It is easy to transform each such chain to a string $S_j$ with property $\pi_j$ so that 
$S_j[i\dd \pi_j[i]] = P_{j,i}$.
The value $S_j[i]$ is not specified if $P_{j,i}=\varepsilon$;
in this case, we may set $S_j[i]$ to an arbitrary character.
The resulting family $\SS = (S_j,\pi_j)_{j=1}^{\floor{z}}$ clearly satisfies
the characterisation of \cref{obs:ms}, which completes the proof of the following result.
\begin{theorem}\label{thm:z_cover}
  Each weighted sequence $X$ has a $z$-estimation.
\end{theorem}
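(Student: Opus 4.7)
The plan is to chain together the perfect matchings supplied by \cref{lem:compat}. Starting with the trivial indexing $P_{j,n+1} := \varepsilon$ for $j=1,\ldots,\floor{z}$ and proceeding backwards from $i=n$ down to $i=1$, I would apply \cref{lem:compat} at each step to index the multiset $\MS_i = \{P_{j,i} : 1 \le j \le \floor{z}\}$ so that $P_{j,i}$ is compatible with $P_{j,i+1}$ for every $j$. This produces $\floor{z}$ chains of pairwise compatible strings, one per index $j$.

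From the $j$-th chain I would extract the property by setting $\pi_j[i] := i + |P_{j,i}| - 1$ and letting $S_j[i]$ be the first character of $P_{j,i}$ (chosen arbitrarily when $P_{j,i} = \varepsilon$). Since $\MS_i$ contains only strings of length at most $n-i+1$ (any longer string $P$ would satisfy $\P_X(P,i) = 0$ and hence $t_i(P)=0$), we get $\pi_j[i] \in \{i-1,\ldots,n\}$; and compatibility of $P_{j,i}$ with $P_{j,i+1}$ gives $|P_{j,i}| \le |P_{j,i+1}|+1$, hence $\pi_j[i] \le \pi_j[i+1]$, so $\pi_j$ is a valid property array.

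A backward induction on $i$ then confirms that $S_j[i\dd\pi_j[i]] = P_{j,i}$: the base case $i=n+1$ is vacuous since both sides equal $\varepsilon$, and in the inductive step the compatibility relation writes $P_{j,i} = c\cdot Q'$ where $c = S_j[i]$ and $Q'$ is a prefix of $P_{j,i+1} = S_j[i+1\dd\pi_j[i+1]]$ of length $|P_{j,i}|-1 = \pi_j[i] - i$; this prefix coincides with $S_j[i+1\dd\pi_j[i]]$ by the inductive hypothesis, so the factor $S_j[i\dd\pi_j[i]]$ equals $P_{j,i}$. In particular $\MS_i = \{S_j[i\dd\pi_j[i]] : 1 \le j \le \floor{z}\}$ as multisets, which together with the defining property of $\MS_i$ matches exactly the characterisation of \cref{obs:ms}, so $\SS = (S_j,\pi_j)_{j=1}^{\floor{z}}$ is a $z$-estimation of $X$.

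The main obstacle is verifying coherence of the reconstruction: one must confirm that the characters $S_j[i]$, which are assigned one position at a time, genuinely reassemble into the factor $P_{j,i}$ rather than into some other string. The compatibility relation does all the work here, simultaneously forcing monotonicity of $\pi_j$ and supplying the inductive step above; once \cref{lem:compat} is available, what remains is straightforward bookkeeping.
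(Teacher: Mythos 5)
Your proposal is correct and takes essentially the same approach as the paper: index the multisets $\MS_i$ into chains of compatible strings via \cref{lem:compat}, read off $\pi_j[i]=i+|P_{j,i}|-1$ and $S_j[i]$ as the leading character (arbitrary when $P_{j,i}=\varepsilon$), and conclude via the characterisation in \cref{obs:ms}. The only difference is that you spell out, via backward induction, the reconstruction step $S_j[i\dd \pi_j[i]]=P_{j,i}$ that the paper dismisses as ``easy to transform,'' which is a fair amount of welcome extra detail rather than a deviation.
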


\section{Efficient Implementation}\label{sec:constant}
In this section we describe an algorithm which, given a weighted sequence $X$ of length $n$ and threshold $z$,
constructs a $z$-estimation of $X$ in $\Oh(nz)$ time.

At a high level, we follow the existential construction of \cref{sec:z}. 
We start with $\MS_{n+1}$, which consists of $\floor{z}$ copies of $\varepsilon$,
and we iterate over positions $i=n,\ldots,1$ transforming $\MS_{i+1}$ to $\MS_i$ so that each $P_{j,i+1}\in \MS_{i+1}$
is replaced with a compatible string $P_{j,i}\in \MS_i$. We simultaneously build the $z$-estimation $\SS = (S_j,\pi_j)_{j=1}^{\floor{z}}$.
More precisely, we set $\pi_j[i]$ to $i+|P_{j,i}|-1$ and $S_j[i]$ to the leading character of $P_{j,i}$, or an arbitrary character if $P_{j,i}=\varepsilon$.

Each transformation simulates the procedure provided in the proof of \cref{lem:compat}.
However, our implementation uses \emph{solid factor tries} in order to achieve $\Oh(z)$ amortised running time.

\subsection{Solid Factor Tries}
Recall that a trie is a rooted tree in which each node represents a string; the string corresponding to node $u$, called the \emph{label} of $u$, is denoted $\lab(u)$.
The root has label $\varepsilon$, and the parent of a node $u$ with $\lab(u)=Pc$ for $c\in \Sigma$
is the node $v$ with $\lab(v)=P$; the edge from $P$ to $Pc$ is \emph{labelled} with $c$.
Observe that the family of solid factors occurring at position $i$ (i.e., strings $P$ such that $t_i(P)>0$)
is closed with respect to prefixes. Thus, we can define a \emph{solid factor trie} $T_i$ whose nodes represent these factors.

We store $\MS_i$ using \emph{tokens} in $T_i$: each $P_{j,i}\in \MS_{i}$ is represented by a token (with identifier~$j$)
located at the node $u\in T_i$ with $\lab(u)=P_{j,i}$. For each token $j$, we store the node $u\in T_i$ with $\lab(u)=P_{j,i}$ and the probability $\P_X(P_{j,i},i)$. 
Observe that the number of tokens at the node $u$ is $m_i(\lab(u))$ and the number of tokens in the subtree rooted at $u$ is $t_i(\lab(u))$.
To simplify notation, we denote $m_i(u)=m_i(\lab(u))$ and $t_i(u)=t_i(\lab(u))$. We have the following simple observation; see also Figure~\ref{fig:sftrees}.

\begin{observation}
  The trie $T_i$ contains $\floor{z}$ tokens in total and every leaf contains tokens.
\end{observation}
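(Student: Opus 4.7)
The plan is to prove both parts by direct accounting, using the multiplicity/subtree identities already noted before the observation, namely that the number of tokens in the subtree rooted at $u$ equals $t_i(u)$ and the number of tokens located exactly at $u$ equals $m_i(u)$.

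For the first claim, I would apply the subtree identity at the root of $T_i$. Since $\lab(\root(T_i)) = \varepsilon$, the total number of tokens in $T_i$ equals $t_i(\varepsilon) = \lfloor \P_X(\varepsilon,i) z \rfloor = \lfloor z \rfloor$, because $\P_X(\varepsilon,i) = 1$ (the empty product). Formally this reduces to showing the subtree identity, which follows by induction on the depth of the trie from the definition $m_i(P) = t_i(P) - \sum_{c \in \Sigma} t_i(Pc)$, telescoping over the descendants of $u$; alternatively, one can just invoke the uniqueness lemma from \cref{sec:z}, which already establishes that the multiplicity of $P$ in $\MS_i$ is $m_i(P)$.

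For the second claim, let $u$ be a leaf of $T_i$ with $\lab(u) = P$. By definition of the solid factor trie, $u \in T_i$ means $t_i(P) > 0$, while $u$ being a leaf means that for every $c \in \Sigma$, the string $Pc$ is not a node of $T_i$, i.e.\ $t_i(Pc) = 0$. Hence the number of tokens located at $u$ is $m_i(P) = t_i(P) - \sum_{c \in \Sigma} t_i(Pc) = t_i(P) > 0$, so $u$ indeed contains at least one token.

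I do not expect a real obstacle here; both assertions are immediate consequences of the definitions of $t_i$, $m_i$, and the solid factor trie, combined with $\P_X(\varepsilon,i)=1$. The only mildly non-trivial point worth stating carefully is the subtree identity $t_i(u) = m_i(u) + \sum_{v\ \text{child of}\ u} t_i(v)$, which is precisely the defining recurrence of $m_i$ restricted to characters $c$ for which $Pc$ is actually represented in $T_i$ (the remaining children contribute $0$).
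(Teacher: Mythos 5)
Your proof is correct and matches the paper's (implicit) justification: the paper states this observation without proof, relying precisely on the facts it records just beforehand—that the token count in the subtree of $u$ is $t_i(\lab(u))$, the count at $u$ is $m_i(\lab(u))$, and $t_i(\varepsilon)=\floor{z}$—which is exactly the accounting you spell out. Your leaf argument ($t_i(Pc)=0$ for all $c$, hence $m_i(P)=t_i(P)>0$) is the intended one-line reason, so there is nothing to add.
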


\begin{figure}[htpb]
  \begin{center}
    \begin{tikzpicture}[scale=0.4]
      \begin{scope}
        \node[mynode] (X) {\,}
        child {
          node[mynode] {\,}
          child {
            node[mynode] {\footnotesize 1}
            child {
              node[mynode] {\,}
              child {
                node[mynode] {\footnotesize 2}
                edge from parent node[left] {\footnotesize$\mathtt{A}$}
              }
              edge from parent node[left] {\footnotesize$\mathtt{A}$}
            }
            edge from parent node[left] {\footnotesize$\mathtt{A}$}
          }
          child {
            node[mynode] {\footnotesize 4}
            child {
              node[mynode] {\,}
              child {
                node[mynode] {\footnotesize 3}
                edge from parent node[right] {\footnotesize$\mathtt{A}$}
              }
              edge from parent node[right] {\footnotesize$\mathtt{A}$}
            }
            edge from parent node[right] {\footnotesize$\mathtt{B}$}
          }
          edge from parent node[left] {\footnotesize$\mathtt{A}$}
        };
        \draw (X) node[below=2.7cm] {$T_1$};
      \end{scope}

      \begin{scope}[xshift=6cm]
        \node[mynode] (Y) {\,}
        child {
          node[mynode] {\footnotesize 1}
          child {
            node[mynode] {\,}
            child {
              node[mynode] {\footnotesize 2}
              edge from parent node[left] {\footnotesize$\mathtt{A}$}
            }
            edge from parent node[left] {\footnotesize$\mathtt{A}$}
          }
          edge from parent node[left] {\footnotesize$\mathtt{A}$}
        }
        child {
          node[mynode] {\footnotesize 4}
          child {
            node[mynode] {\,}
            child {
              node[mynode] {\footnotesize 3}
              edge from parent node[right] {\footnotesize$\mathtt{A}$}
            }
            edge from parent node[right] {\footnotesize$\mathtt{A}$}
          }
          edge from parent node[right] {\footnotesize$\mathtt{B}$}
        };
        \draw (Y) node[below=2.7cm] {$T_2$};
      \end{scope}

      \begin{scope}[xshift=12cm]
        \node[mynode] (Z) {\,}
        child {
          node[mynode] {\footnotesize 1}
          child {
            node[mynode] {\,}
            child {
              node[mynode] {\footnotesize 2}
              edge from parent node[left] {\footnotesize$\mathtt{A}$}
            }
            child {
              node[mynode] {\footnotesize 3}
              edge from parent node[right] {\footnotesize$\mathtt{B}$}
            }
            edge from parent node[left] {\footnotesize$\mathtt{A}$}
          }
          edge from parent node[left] {\footnotesize$\mathtt{A}$}
        }
        child {
          node[mynode] {\footnotesize 4}
          edge from parent node[right] {\footnotesize$\mathtt{B}$}
        };
        \draw (Z) node[below=2.7cm] {$T_3$};
      \end{scope}

      \begin{scope}[xshift=18cm]
        \node[mynode] (X1) {\footnotesize 4}
        child {
          node[mynode] {\footnotesize 1}
          child {
            node[mynode] {\,}
            child {
              node[mynode] {\footnotesize 2}
              edge from parent node[left] {\footnotesize$\mathtt{B}$}
            }
            edge from parent node[left] {\footnotesize$\mathtt{A}$}
          }
          child {
            node[mynode] {\,}
            child {
              node[mynode] {\footnotesize 3}
              edge from parent node[right] {\footnotesize$\mathtt{B}$}
            }
            edge from parent node[right] {\footnotesize$\mathtt{B}$}
          }
          edge from parent node[left] {\footnotesize$\mathtt{A}$}
        };
        \draw (X1) node[below=2.7cm] {$T_4$};
      \end{scope}

      \begin{scope}[xshift=24cm]
        \node[mynode] (Y1) {\,}
        child {
          node[mynode] {\footnotesize 1}
          child {
            node[mynode] {\footnotesize 2}
            edge from parent node[left] {\footnotesize$\mathtt{B}$}
          }
          edge from parent node[left] {\footnotesize$\mathtt{A}$}
        }
        child {
          node[mynode] {\footnotesize 4}
          child {
            node[mynode] {\footnotesize 3}
            edge from parent node[right] {\footnotesize$\mathtt{B}$}
          }
          edge from parent node[right] {\footnotesize$\mathtt{B}$}
        };
        \draw (Y1) node[below=2.7cm] {$T_5$};
      \end{scope}

      \begin{scope}[xshift=30cm,level 1/.style = {sibling distance=3cm}]
        \node[mynode] (Z1) {\,}
        child {
          node[mynode] {\footnotesize 1}
          edge from parent node[left] {\footnotesize$\mathtt{A}$}
        }
        child {
          node[ellipse, draw, inner sep=1pt] {\footnotesize 2,3,4}
          edge from parent node[right] {\footnotesize$\mathtt{B}$}
        };
        \draw (Z1) node[below=2.7cm] {$T_6$};
      \end{scope}

    \end{tikzpicture}
  \end{center}
\caption{
  The solid factor tries for the weighted sequence $X$ from Table~\ref{tab:WS} with $z=4$.
  Tokens in the nodes are numbered according to the order from Table~\ref{tab:match}.
  \label{fig:sftrees}
}
\end{figure}

\subsection{Transformation Algorithm}
For each index $i$, we transform the solid factor trie $T_{i+1}$ to $T_{i}$
and move the tokens so that $\MS_{i+1}$ is transformed to $\MS_i$.

Before we describe the implementation, let us formulate a relation between $T_i$ and $T_{i+1}$.
\begin{observation}\label{obs:rel}
If $u\in T_{i}$ has a non-empty label, $\lab(u)=cP$, for some $c\in \Sigma$,
then $T_{i+1}$ contains a node $v$ with label $\lab(v)=P$.
\end{observation}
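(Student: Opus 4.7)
The observation is essentially an unfolding of the definition of a solid factor together with the multiplicative structure of $\P_X$. My plan is to trace the string $cP$ back to the condition $t_i(cP)>0$ that makes $u$ a node of $T_i$, then show that the same condition for $P$ at position $i+1$ must hold, so that a node $v\in T_{i+1}$ with $\lab(v)=P$ exists.

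First, I would recall that the nodes of the solid factor trie $T_i$ are, by definition, exactly the strings $Q$ with $t_i(Q)>0$, i.e., those satisfying $\P_X(Q,i)\ge \tfrac1z$. Hence $u\in T_i$ with $\lab(u)=cP$ gives $\P_X(cP,i)\ge \tfrac1z$. Second, using the multiplicative definition
\[
\P_X(cP,i) \;=\; p^{(X)}_i(c)\cdot \P_X(P,i+1),
\]
together with $p^{(X)}_i(c)\le 1$, I would deduce $\P_X(P,i+1)\ge \P_X(cP,i)\ge \tfrac1z$. Therefore $t_{i+1}(P)=\floor{\P_X(P,i+1)\,z}\ge 1$, which by definition of $T_{i+1}$ means that $T_{i+1}$ contains a node $v$ with $\lab(v)=P$.

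There is no real obstacle here; the statement is a direct one-line consequence of the factorisation of the matching probability and the monotonicity of $x\mapsto \floor{xz}$. The only thing one might want to double-check is the corner case $P=\varepsilon$: then $\P_X(P,i+1)=1\ge\tfrac1z$ and $T_{i+1}$ always has a root, so the claim still holds. This mirrors the same superadditivity argument used in Lemma~\ref{lem:compat}, so no additional machinery is needed.
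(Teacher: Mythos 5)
Your proof is correct and is exactly the intended justification: the paper leaves \cref{obs:rel} unproved as an immediate observation, and your argument via the factorisation $\P_X(cP,i)=p^{(X)}_i(c)\cdot\P_X(P,i+1)$ and $p^{(X)}_i(c)\le 1$, giving $t_{i+1}(P)>0$, is the standard one-line reasoning behind it. The corner case $P=\varepsilon$ is handled properly as well, so nothing is missing.
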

Consequently, each non-root node $u\in T_i$ has a corresponding node $v\in T_{i+1}$. In our construction algorithm,
we sometimes reuse $v$ as $u$; otherwise, we create $u$ as a copy of $v$.
More precisely, we distinguish a \emph{heavy letter} $h\in \Sigma$ maximising probability $p^{(X)}_i(c)$ over $c\in \Sigma$.
We reuse $v$ if $\lab(u)$ starts with $h$ and create a copy of $v$ otherwise.

This approach is implemented as follows. First, we create the root of $T_i$ and attach $T_{i+1}$ to the new root using an edge with label $h$. 
The resulting subtree, denoted $T_{i,h}$, contains all tokens present in $T_{i+1}$ and may contain nodes $v$ with $t_i(v)=0$ (we piggyback trimming them to the last phase when tokens are moved).
Next, we consider all the remaining letters $c\in \Sigma\setminus\{h\}$. For each such letter we shall build a subtree $T_{i,c}$ representing
solid factors occurring at position $i$ and starting with character $c$.
We simultaneously build and traverse $T_{i,c}$: we construct the children of a node $u$ while visiting $u$ for the first time.
While at node $u$ with $\lab(u)=cP$, we maintain the probability $\P_X(cP,i)$ and a pointer to the corresponding node $v\in T_{i,h}$
such that $\lab(v)=hP$. To construct the children of $u$, we simply compute $t_i(cPc')$ for each $c'\in \Sigma$.
Moreover, we determine $m_i(cP)$ and place $m_i(cP)$ \emph{token requests} at node $v$, announcing that $m_i(cP)$ tokens are needed at $u$.

\begin{figure}[htpb]
\include{__fig_transform}
\caption{
  Transformation between $T_4$ to $T_3$ from the example in Figure~\ref{fig:sftrees}.
  To the left: the trie $T_4$ with letter probabilities (in blue).
  In the middle: the trie $T_4$ is copied as $T_{3,\mathtt{A}}$, whereas $T_{3,\mathtt{B}}$ is created using a backtracking algorithm
  (in this case, it has only one node).
  Asterisks denote nodes that require tokens.
  The token request is shown with an arrow.
  To the right: the final $T_3$ created after the tokens are moved up and redundant nodes are removed.
  Note that the tokens number 1 and 4 could have been interchanged depending on the order of processing.
  \label{fig:transform}
}
\end{figure}

Finally, we move the tokens and trim the redundant nodes of $T_{i,h}$.
We process the tokens in an arbitrary order. Consider a token located at node $v$ of $T_{i,h}$ with $\lab(v)=hQ$ (the token used to represent $Q\in \MS_{i+1}$).
We traverse the path from $v$ towards the root of $T_i$ maintaining the probability $\P_X(\lab(v'),i)$ at the currently visited node $v'$.
First, we check if there is any token request at $v'$. If so, we comply with the request, remove it, and terminate the traversal. 
Otherwise, we compute $m_i(v')$ using the probability. If $v'$ contains less than $m_i(v')$ already processed tokens,
we place our token at $v'$ and terminate the traversal.
Otherwise, we proceed to the parent of $v'$.  If $v'$ is a leaf and does not contain any (processed or unprocessed) tokens, we remove $v'$ from $T_{i,h}$.
If the traversal reaches the root of $T_i$, we place the token unconditionally at the root.
Figure~\ref{fig:transform} illustrates this procedure on an example.

\subsubsection{Correctness}
We shall prove that the procedure described above correctly computes $T_i$ and $\MS_i$.
Due to \cref{obs:rel}, the trie $T_i$ contains all the necessary nodes. We only need to prove 
that no redundant nodes $v$ (with $t_i(v)=0$) are left in $T_{i,h}$.
Suppose that $v$ is the deepest such node; clearly, it must be a leaf of $T_{i,h}$.
We did not place the token at $v$ because $m_i(v)\le t_i(v)=0$. On the other hand, tokens were present in all leaves of $T_{i+1}$,
so the subtree of $v$ in $T_{i,h}$ initially contained a token. Let us consider the moment of moving the last token in this subtree.
If the token travelled further to the parent of $v$, we would have removed $v$.
Thus, the token must have been placed at a node $u$ complying with a token request at $u$.
However, in that case we have $t_i(v)\ge t_i(u) \ge m_i(u)>0$, because $h$ is the heavy character. 
This contradiction concludes the proof.

Hence, we proceed to proving that the final configuration of tokens represents $\MS_i$.
For this, we observe that our algorithm simulates the greedy procedure in the proof of \cref{lem:compat}.
In other words, we shall prove that we transformed $P_{j,i+1}\in \MS_{i+1}$ 
to the longest compatible element of $\MS_i$ which was still unmatched when we processed token $j$.
Suppose that there was an unmatched string $P'\in \MS_i$ longer than $P_{j,i}$.
Let $P'=cQ'$ and observe that, when processing token $j$, we visited the node $v'$ with $\lab(v')=hQ'$.
If $c= h$, then we would have less than $m_i(v')$ processed tokens at $v'$.
Otherwise, there must have been a token request at $v'$.
For either event we would not have proceeded to the parent of $v'$.
This contradiction concludes the proof.

\subsubsection{Running Time Analysis}

It remains to show that the total running time of the $n$ transformations is $\Oh(nz)$. 
In a single iteration, processing the $j$-th token, i.e., transforming $P_{j,i+1}$ to $P_{j,i}$,
we visited at most $1+|P_{j,i+1}|-|P_{j,i}|$ nodes of $T_{i,h}$ and deleted some of them.
Across all iterations this is $\Oh(n)$ per token and $\Oh(nz)$ in total.
The remaining operations (construction of subtrees $T_{i,c}$) take $\Oh(1)$ time per created node.
The final tree $T_1$ has $\Oh(nz)$ nodes and the overall number of deleted nodes is $\Oh(nz)$.
Hence, the total number of created nodes is also $\Oh(nz)$.

This concludes the proof that the running time is $\Oh(nz)$.
Hence, we achieve the main goal of this section.
\begin{theorem}\label{thm:specweight}
For a weighted sequence $X$ of length $n$ over a constant-sized alphabet, one can construct a $z$-estimation in $\Oh(nz)$ time.
\end{theorem}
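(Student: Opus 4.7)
The plan is to implement the existence argument of \cref{thm:z_cover} efficiently by sweeping over positions $i=n,n-1,\ldots,1$ and maintaining a compact representation of the multiset $\MS_i$. I would use a \emph{solid factor trie} $T_i$ whose nodes are labelled by the distinct solid factors occurring at position $i$, placing at each node $u$ a number of \emph{tokens} equal to the multiplicity $m_i(\lab(u))$, so that $T_i$ holds exactly $\floor{z}$ tokens and, by minimality, every leaf carries at least one. Transforming $\MS_{i+1}$ into $\MS_i$ then becomes the task of updating $T_{i+1}$ into $T_i$ while rerouting each token to a compatible destination in the sense of \cref{lem:compat}, which immediately yields the entries $S_j[i]$ and $\pi_j[i]$ of the target $z$-estimation.

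To avoid rebuilding the trie from scratch, I would exploit the fact that whenever $cP$ labels a node of $T_i$, the shorter string $P$ labels a node of $T_{i+1}$. Hence $T_{i+1}$ already serves, up to prepending a single character, as a ready-made candidate for any one first-character subtree of $T_i$. Picking the \emph{heavy} letter $h\in\Sigma$ maximising $p_i^{(X)}(c)$, I would reuse $T_{i+1}$ wholesale as the heavy subtree $T_{i,h}$, and for every other $c\in\Sigma\setminus\{h\}$ construct the corresponding subtree $T_{i,c}$ from scratch via a short depth-first traversal that evaluates $t_i(cP)$ from the maintained probabilities; whenever the resulting multiplicity $m_i(cP)$ is positive, the traversal posts that many \emph{token requests} at the matching node of $T_{i,h}$. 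Finally, I would sweep the tokens sitting in $T_{i,h}$ upward: each token walks toward the root, stops at the first ancestor carrying a pending token request or still having a deficit in its $m_i(\cdot)$ allocation, and prunes empty leaves along the way.

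The main obstacle is bounding the total running time by $\Oh(nz)$; a single transformation admits no obvious $\Oh(z)$ bound because a token may climb many levels. I would handle this by a telescoping amortisation in the spirit of Ukkonen's suffix tree construction: token $j$ begins the $i$-th sweep at depth $|P_{j,i+1}|$ and ends at depth $|P_{j,i}|$, so the sweep costs $\Oh\bigl(\sum_{j}(1+|P_{j,i+1}|-|P_{j,i}|)\bigr)$, and summing over all $i$ the length differences telescope to $\Oh(n)$ per token, hence $\Oh(nz)$ overall. The auxiliary traversals building the non-heavy subtrees cost $\Oh(1)$ per created node; since every deleted node must previously have been created and the final trie $T_1$ has at most $\Oh(nz)$ nodes, the total number of node creations is also $\Oh(nz)$.

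For correctness I would argue that this procedure faithfully simulates the greedy matching used inside the proof of \cref{lem:compat}: each token ascends toward the root and halts precisely at the deepest still-available compatible slot, either by fulfilling an outstanding request from a non-heavy subtree or by filling a deficit inside the heavy one. The delicate invariant is that no stranded leaf $v$ with $t_i(v)=0$ remains in $T_{i,h}$: the last token to leave the subtree of such a $v$ must have done so by complying with a request at a descendant $u$, but then $t_i(v)\ge t_i(u)\ge m_i(u)>0$, where the first inequality is monotonicity along prefixes and the fact that $h$ is heavy ensures $u$ actually sits in $T_{i,h}$, contradicting $t_i(v)=0$. Combined with \cref{thm:z_cover}, these two points yield the $\Oh(nz)$-time construction asserted by the theorem.
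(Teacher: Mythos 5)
Your proposal is correct and follows essentially the same route as the paper's own proof: the same solid factor tries with tokens, reuse of $T_{i+1}$ as the heavy-letter subtree, token requests posted from the freshly built non-heavy subtrees, upward token sweeps with leaf pruning, the same heavy-letter argument ruling out stranded $t_i(v)=0$ leaves, and the same Ukkonen-style telescoping bound of $\Oh(1+|P_{j,i+1}|-|P_{j,i}|)$ per token per step plus an $\Oh(1)$-per-created-node charge. No substantive differences to report.
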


\newcommand{\suflink}{\mathsf{suflink}}
\newcommand{\eps}{\varepsilon}

\section{Property Indexing Made Simple}\label{sec:index}
  Every known solution to the Property Indexing problem makes use of suffix trees; ours is no exception.
  Below we recall the basics on suffix trees.
  \subsection{Suffix Trees}
  The \textit{suffix tree} $T$ of a non-empty string $S$ of length $n$ is a compact trie representing 
  all suffixes of $S$. The nodes of the trie which become nodes of the suffix tree
  (i.e., branching nodes, terminal nodes, and the root) are called {\it explicit} nodes, 
  while the other nodes are called {\it implicit}.
  The edges out-going from a node are labelled with their first letters and can be stored, e.g., in a list.
 
  Each edge of the suffix tree can be viewed as an upward maximal 
  path of implicit nodes starting with an explicit node. Moreover, each node belongs to a unique path of that kind. 
  Then, each node of the trie can be represented in the suffix tree by the edge it belongs to and an index within the corresponding path.
  We use $\lab(v)$ to denote the \textit{path-label} of a node $v$, i.e., the concatenation 
  of the edge labels along the path from the root to $v$.
  The terminal node corresponding to suffix $S[i \dd n]$ is marked with the index $i$.
  Each string $P$ occurring in $S$ is uniquely represented by either an explicit or an implicit node of $T$, called the \emph{locus} of $P$.
  The \textit{suffix link} of a node $v$ with path-label $\lab(v)= c P$ is a pointer to the node path-labelled $P$, 
  where $c \in \Sigma$ is a single letter and $P$ is a string. The suffix link of every non-root explicit $v$ leads to an explicit node of $T$.

  The suffix tree of a string of length $n$ even over an integer alphabet can be constructed in $\Oh(n)$ time \cite{DBLP:conf/focs/Farach97}.
  
  \subsection{Property Suffix Tree Construction}
  In analogy to the suffix tree, given a string $S$ with property $\Pi$ represented by an array $\pi$, 
  we define the \emph{property suffix tree} of $(S,\pi)$ as the compact trie representing strings $S[i\dd \pi[i]]$.
  Each terminal node $v$ stores a list $L_v$ containing all indices $i$ such that $S[i\dd \pi[i]]$ is the path-label of $v$.
  This way, $\OOcc_{\pi}(P,S)$ can be retrieved by locating the locus of $P$
  and writing down indices in lists $L_v$ for all descendants of the locus.

  For a given string $S$, we construct the property suffix tree with respect to property $\Pi$ from the suffix tree of $S$.
  This process is implemented in three steps. First, for each index $i$ we determine the locus $v_i$ of $S[i\dd \pi[i]]$.
  Next, we make all these loci explicit to create new terminal nodes. Finally, we remove nodes which should no longer exist in the tree or no longer be explicit.
  
  Our approach to the first phase is similar to Ukkonen's suffix tree construction \cite{DBLP:journals/algorithmica/Ukkonen95}.
  We are to determine the locus $v_{i}$ of $S[i \dd \pi[i]]$.
 For this, we shall traverse the suffix tree starting from an explicit node $u_i$ guaranteed to be an ancestor of $v_i$.
 We obtain $u_i$ by following the suffix link of the nearest explicit ancestor of $v_{i-1}$ ($v_{i-1}$ itself if it is explicit).
 If $i=1$ or the explicit ancestor of $v_{i-1}$ is the root, we simply set $u_i$ as the root.
 Since $\pi[i]\ge \pi[i-1]$ for $i>1$, $u_i$ is indeed an ancestor of $v_{i}$.
 Therefore, we can progress down the edges in the suffix tree from $u_i$, keeping track of the current depth until the desired depth is reached.
We know that $v_i$ exists in the tree, so it suffices to read only the first letters of each traversed edge.

This procedure results in the sequence of loci $(v_i)_{i=1}^n$.
Let us analyse its time complexity.
In the $i$-th iteration we traverse: one edge to reach $u_i$, then several edges a node $w$ whose suffix link is $u_{i+1}$, and finally at most one edge
to reach $v_i$.
Hence, the number of edges traversed in this iteration is at most $2+|\lab(u')|-|\lab(u_{i})| \le 3+|\lab(u_{i+1})|-|\lab(u_{i})|$, which gives $\Oh(n)$ overall.

The remaining steps of the algorithm are performed as follows.
We sort the loci $v_i$ by the path label length $\pi[i]-i+1$ and group them based on the edge where they are located.
This lets us appropriately subdivide each edge and compute the lists $L_v$ for the new terminal nodes.
Finally, we trim the tree: we traverse the tree bottom-up and remove or dissolve nodes which should no longer be explicit.
These steps clearly work in $\Oh(n)$ time.
  
\begin{theorem}\label{thm:prp}
For a string $S$ and property $\Pi$ represented with a table $\pi$,
the property suffix tree can be computed in $\Oh(n)$ time. Moreover, this data structure
can answer property indexing queries in $\Oh(|P|)$ time (counting) or $\Oh(|P|+|\OOcc_{\pi}(P,S)|)$ time (reporting).
\end{theorem}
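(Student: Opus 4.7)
The plan is to build the property suffix tree starting from the standard suffix tree $T$ of $S$, which can be constructed in $\Oh(n)$ time by Farach's algorithm. I would then perform three linear-time passes: first, locate for each $i\in\{1,\ldots,n\}$ the (possibly implicit) locus $v_i$ of $S[i\dd \pi[i]]$ in $T$; second, make each $v_i$ explicit by subdividing edges and attach a terminal marker storing $i$ in the corresponding list $L_{v_i}$; and finally, trim the tree bottom-up, removing or dissolving nodes that should no longer be present or explicit in the property suffix tree. For queries, the resulting tree behaves like a suffix tree restricted to the strings $S[i\dd \pi[i]]$, so pattern location reduces to a standard root-to-locus walk.

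The non-trivial step is the first phase, and this is where amortisation kicks in. Because $\pi$ is non-decreasing, the string $S[i\dd \pi[i]]$ is obtained from $S[i-1\dd \pi[i-1]]$ by dropping the leading character and possibly appending some characters; hence, following the suffix link from the nearest explicit ancestor of $v_{i-1}$ yields an explicit ancestor $u_i$ of $v_i$. From $u_i$ we descend toward $v_i$, and since $v_i$ is guaranteed to exist in $T$ it suffices to inspect only the first letter of each traversed edge while accumulating the depth. The cost in iteration $i$ is bounded by roughly $3 + |\lab(u_{i+1})| - |\lab(u_i)|$ edges: one step through the suffix link, the descents from $u_i$ past the explicit node whose suffix link gives $u_{i+1}$, and at most one edge to reach $v_i$ itself. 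Summing telescopically over $i$ gives an $\Oh(n)$ total, mirroring Ukkonen's analysis. Phases two and three are routine: grouping the loci by the edge on which they lie and sorting each group by depth $\pi[i]-i+1$ via a single bucket sort lets us split every edge in a linear sweep, and a post-order traversal handles trimming.

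For querying, I would walk $P$ down the property suffix tree in $\Oh(|P|)$ time using the surviving edge labels, which remain substrings of $S$. For counting, I would precompute, during a single post-order pass, the value $c_v = \sum_u |L_u|$ summed over explicit descendants $u$ of each explicit node $v$; the answer is then $c_w$ at the locus $w$ of $P$. For reporting, I would traverse the subtree of the locus and output every index in every $L_u$ encountered; the trimming phase guarantees that every leaf is terminal and every internal node branches, so the subtree has $\Oh(|\OOcc_\pi(P,S)|)$ nodes and this traversal costs $\Oh(|\OOcc_\pi(P,S)|)$. The main obstacle I foresee is making the amortised analysis of phase one fully rigorous, in particular handling the corner case in which the nearest explicit ancestor of $v_{i-1}$ is the root (so $u_i$ is reset to the root) and verifying that such resets do not disturb the telescoping sum.
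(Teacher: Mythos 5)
Your proposal follows essentially the same route as the paper: build the suffix tree in $\Oh(n)$ time, locate the loci of the strings $S[i\dd\pi[i]]$ via suffix links with an Ukkonen-style telescoping bound of roughly $3+|\lab(u_{i+1})|-|\lab(u_i)|$ per step, then subdivide edges (grouping loci per edge, sorted by $\pi[i]-i+1$) and trim bottom-up, with queries answered by a root-to-locus walk plus subtree counts/lists. The corner case you worry about (reset of $u_i$ to the root) is harmless, since then $|\lab(u_i)|=0$ and the per-iteration bound still holds, so the telescoping sum is undisturbed; your explicit treatment of counting via precomputed subtree sums and of reporting via the trimmed-tree size is a correct filling-in of details the paper leaves implicit.
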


\section{Weighted Index}\label{sec:appl}
Let us first describe our data structure for the Weighted Indexing problem.
For a weighted sequence $X$ and a threshold $z$, we construct a $z$-estimation $\SS=(S_j,\pi_j)_{j=1}^{\floor{z}}$ of $X$,
concatenate all the strings and shift the properties so that a single string $S$ with property $\pi$ is obtained.
Our weighted index is the property suffix tree of $S$ and $\pi$.
In the property suffix tree, each terminal node is labelled by the list of all the occurrences of the corresponding string in $S$
respecting its property.
We shift these indices so that they describe the indices within the respective strings $S_j$.

The space complexity of the index is obviously $\Oh(nz)$, where $n$ is the length of $X$.
Theorems~\ref{thm:specweight} and~\ref{thm:prp} show that the data structure can be constructed in $\Oh(nz)$ time.
The resulting weighted index is very similar to the one constructed in \cite{DBLP:conf/cpm/BartonKPR16},
even though the construction algorithm is very different.

By Definition~\ref{def:estim}, a string $P$ occurs at position $i$ in $X$ if and only if
it occurs at this position in at least one of the strings.
Thus, to check if $\Occ(P,X) \ne \emptyset$, it suffices to traverse down the property suffix tree
and check if it contains a node $v$ corresponding to $P$.
This search takes $\Oh(m)$ time, where $m=|P|$.
The two remaining types of operations---counting and reporting---require finding distinct positions
in the labels of the terminals in the subtree of $v$.
They can be implemented after additional preprocessing for the colour set size \cite{DBLP:conf/cpm/Hui92}
and coloured range listing problem \cite{Muthukrishnan2002}; details can be found in \cite{DBLP:conf/cpm/BartonKPR16}.
We obtain the same complexities as in Theorem~16 from \cite{DBLP:conf/cpm/BartonKPR16}.

\begin{theorem}\label{thm:1}
  For a weighted sequence $X$ of length $n$ over a constant-sized alphabet and a threshold $z$,
  there is a weighted index of $\Oh(nz)$ size that can be constructed in $\Oh(nz)$ time and answers decision and counting queries in $\Oh(m)$ time
  and reporting queries in $\Oh(m+|\Occ(P,X)|)$ time.
\end{theorem}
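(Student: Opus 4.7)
The plan is to follow the blueprint sketched in the paragraphs preceding the theorem, combining the two main tools developed so far: the $z$-estimation of \cref{thm:specweight} and the property suffix tree of \cref{thm:prp}. First, I would invoke \cref{thm:specweight} to construct, in $\Oh(nz)$ time, a $z$-estimation $\SS=(S_j,\pi_j)_{j=1}^{\floor{z}}$ of $X$. Then I would concatenate $S := S_1 S_2 \cdots S_{\floor{z}}$ (conceptually, possibly separated by sentinel symbols so that no pattern crosses a boundary) and shift each $\pi_j$ by the appropriate offset to obtain a single property array $\pi$ of length $\Oh(nz)$ satisfying $\pi[1]\le \cdots \le \pi[|S|]$. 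Applying \cref{thm:prp} to $(S,\pi)$ yields the property suffix tree in $\Oh(nz)$ time and space. Finally, in each terminal list $L_v$ I would replace every index $i$ in $S$ by the corresponding position in $X$ (i.e.\ $((i-1) \bmod n)+1$), so that labels refer to positions in the weighted sequence.

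For the query algorithms, I would argue correctness directly from \cref{def:estim}: by \cref{obs:ms}, a string $P$ occurs at position $i$ in $X$ iff $i$ appears (under the above shift) in some $L_v$ in the subtree of the locus of $P$ in the property suffix tree. For a \emph{decision} query we simply descend along $P$ in $\Oh(m)$ time and check whether the locus exists; this immediately answers whether $\Occ(P,X)\ne \emptyset$. The substantive work is then in \emph{counting} and \emph{reporting}, where we must output each position in $X$ exactly once even though the same position may appear in several lists $L_v$ below the locus of $P$ (because distinct $S_j$'s may witness the same occurrence in $X$).

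To handle this, I would colour each terminal of the property suffix tree by its shifted $X$-position and then preprocess the tree with the colour-set-size structure of Hui~\cite{DBLP:conf/cpm/Hui92} for counting and with the coloured range listing structure of Muthukrishnan~\cite{Muthukrishnan2002} for reporting, exactly as done in~\cite{DBLP:conf/cpm/BartonKPR16}. Each locus $v$ corresponds to a contiguous range of terminals (the leaves of its subtree), and these data structures return the number of distinct colours in such a range in $\Oh(1)$ time and enumerate them in time proportional to the output. Combined with the $\Oh(m)$ descent to the locus, this gives $\Oh(m)$ counting queries and $\Oh(m+|\Occ(P,X)|)$ reporting queries, all within $\Oh(nz)$ preprocessing and space.

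The main obstacle is the duplicate-position issue just described: a naive count of occurrences in the property suffix tree would overshoot $|\Occ(P,X)|$ whenever several copies in $\SS$ jointly witness the same position in $X$. This is why the reduction to coloured range queries is essential and is the only non-trivial step beyond plugging in \cref{thm:specweight} and \cref{thm:prp}; since this reduction is already spelled out in detail in~\cite{DBLP:conf/cpm/BartonKPR16}, my proof would simply invoke it as a black box to conclude the stated complexities.
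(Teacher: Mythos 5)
Your proposal matches the paper's own construction essentially step for step: build the $z$-estimation via \cref{thm:specweight}, concatenate the strings and shift the properties, build the property suffix tree via \cref{thm:prp}, relabel terminal lists by positions within the respective $S_j$, answer decision queries by a plain $\Oh(m)$ descent, and delegate counting/reporting of distinct positions to the colour-set-size and coloured range listing structures exactly as in Barton et al.\ \cite{DBLP:conf/cpm/BartonKPR16}. The only cosmetic difference is your optional sentinel separators, which the paper omits since the shifted property array already prevents occurrences from crossing concatenation boundaries.
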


Other applications of the weighted index mentioned in \cite{DBLP:conf/cpm/BartonKPR16} include $\Oh(nz)$-time computation of the weighted prefix table
and of all covers of a weighted sequence.
Our weighted index can be used in both.

\section{Approximate Weighted Index}\label{sec:appl2}
Now let us proceed to the solution of the Approximate Weighted Indexing problem.
We are to answer queries for a pattern $P$ and a probability threshold $\frac{1}{z^{\prime}}$
and are allowed to report occurrences with probability $\ge \frac{1}{z^{\prime}}-\epsilon$,
for a given value of $\epsilon>0$.
Let us recall that \cite{DBLP:conf/edbt/BiswasPTS16} solve this problem in $\Oh(\frac{1}{\epsilon}nz^2)$ space
(with $\Omega(\frac{1}{\epsilon}n^2z^2)$ construction time)
with $\Oh(m + |\Occpar{\frac{1}{z'}-\epsilon}(P,X)|)$-time queries, assuming that $z^{\prime} \le z$ holds in all queries.
Our techniques lead to a substantial improvement over the complexities of this index.

Assume that the query is for a pattern $P$ and a threshold $\frac{1}{z^{\prime}}$.
If $\frac{1}{z^{\prime}}<\epsilon$, then the query is trivial as all the positions in $X$
can be reported.
Henceforth, we assume that $\frac{1}{z^{\prime}}\ge\epsilon$.

Let us consider a $z$-estimation $\SS$ for the weighted sequence with $z=\frac{1}{\epsilon}$.
Let $\ell=\floor{\frac{z}{z^{\prime}}}$.
By Definition~\ref{def:estim}, we can return position $i$ as an occurrence of $P$ based on
whether $\Count_{\SS}(P,i) \ge \ell$; this is shown in the following lemma.

\begin{lemma}
  If $\Count_{\SS}(P,i) \ge \ell$, then $\P_X(P,i) \ge \frac{1}{z^{\prime}}-\epsilon$.
  If $\Count_{\SS}(P,i) < \ell$, then $\P_X(P,i) < \frac{1}{z^{\prime}}$.
\end{lemma}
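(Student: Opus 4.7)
The plan is to reduce the lemma to the two-sided bound
$\tfrac{1}{z}\Count_\SS(P,i) \le \P_X(P,i) < \tfrac{1}{z}\Count_\SS(P,i) + \tfrac{1}{z}$,
which was observed right after Definition~\ref{def:estim} (and is a direct consequence of $\Count_\SS(P,i) = \floor{\P_X(P,i)z}$). With $z=\tfrac{1}{\epsilon}$ and $\ell = \floor{z/z'}$, both statements then become purely arithmetic manipulations of the floor function.

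For the first claim, I would assume $\Count_\SS(P,i) \ge \ell$ and apply the lower bound, obtaining
\[
\P_X(P,i) \;\ge\; \frac{\Count_\SS(P,i)}{z} \;\ge\; \frac{\floor{z/z'}}{z} \;\ge\; \frac{z/z' - 1}{z} \;=\; \frac{1}{z'} - \frac{1}{z} \;=\; \frac{1}{z'} - \epsilon,
\]
using $\floor{x}\ge x-1$ in the third step. For the second claim, I would assume $\Count_\SS(P,i) < \ell$, which (since both sides are integers) is equivalent to $\Count_\SS(P,i) \le \ell - 1$, and apply the upper bound:
\[
\P_X(P,i) \;<\; \frac{\Count_\SS(P,i) + 1}{z} \;\le\; \frac{\ell}{z} \;=\; \frac{\floor{z/z'}}{z} \;\le\; \frac{1}{z'},
\]
this time using $\floor{x}\le x$.

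There is really no hard part here: the only subtlety is remembering that $\Count_\SS(P,i)$ is an integer, so $\Count_\SS(P,i) < \ell$ gives $\Count_\SS(P,i) \le \ell-1$, which is what keeps the second bound strict. Both chains of inequalities take one or two lines, so the proof should be kept compact.
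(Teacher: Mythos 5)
Your proof is correct and follows essentially the same route as the paper: the lower bound $\P_X(P,i)\ge\tfrac1z\Count_\SS(P,i)$ with $\floor{x}\ge x-1$ for the first claim, and the integrality of $\Count_\SS(P,i)=\floor{\P_X(P,i)z}$ together with $\floor{x}\le x$ for the second. The only cosmetic difference is that the paper argues the second claim directly from $\floor{\P_X(P,i)z}<\ell$ rather than via the stated upper bound $\P_X(P,i)<\tfrac1z\Count_\SS(P,i)+\tfrac1z$, which is an equivalent formulation.
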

\begin{proof}
  Assume that $\Count_{\SS}(P,i) \ge \ell$.
  Then
  $$\P_X(P,i) \ge \tfrac1z \Count_{\SS}(P,i) \ge \tfrac1z \floor{\tfrac{z}{z^{\prime}}} \ge \tfrac1z (\tfrac{z}{z^{\prime}}-1) = \tfrac{1}{z^{\prime}}-\epsilon.$$
  Now assume that $\Count_{\SS}(P,i) < \ell$.
  As $\Count_{\SS}(P,i) = \floor{\P_X(P,i) z}$, this concludes that $\P_X(P,i) z < \ell$, which is equivalent to
  $\P_X(P,i) < \tfrac{\ell}{z} = \tfrac1z \floor{\tfrac{z}{z^{\prime}}} \le \tfrac{1}{z^{\prime}}$.
\end{proof}

Thus our approximate weighted index for $X$ is the weighted index for $X$ constructed for $z=\frac{1}{\epsilon}$.
To obtain the desired accuracy, it suffices to find the node $v$ in the property suffix tree
that corresponds to $P$ and report all positions $i$ in $X$ such that there are at least $\floor{\frac{z}{z'}}$
leaves in the subtree of $v$ labelled with the position $i$.
Let us show that this can be done by augmenting the weighted index by a data structure for \emph{(top-$k$) document retrieval}.

A version of the document retrieval problem (see Section 4.1 in \cite{DBLP:conf/soda/NavarroN12}) can be stated operationally as follows.
We are given a compact trie $T$ with $N$ leaves, each leaf labelled with a document number being a positive integer up to $N$.
(Usually $T$ is a suffix tree of a collection of documents.)
Given a pattern $P$, let $v$ be the locus of $P$.
Our goal is to report subsequent documents whose numbers occur most frequently
in the leaves of the subtree of $v$ until the process of reporting is interrupted.
In \cite{DBLP:conf/soda/NavarroN12} a data structure of size $\Oh(N)$ is shown that, given
the node $v$, reports $k$ top-scoring documents in $\Oh(k)$ time.
The construction time of the data structure is $\Oh(N \log N)$.

We can augment our property suffix tree with this data structure
with the document numbers being the labels of terminals (we can create a separate leaf for each label).
This gives $N=\Oh(nz)=\Oh(\frac{n}{\epsilon})$.
To find the documents with at least $\ell$ occurrences, we compute by doubling the smallest $k$
such that the last of the top $k$ documents reported has less than $\ell$ occurrences.
The number of documents reported in the last step of the doubling search will be at most $2|\Occpar{\frac{1}{z'}-\epsilon}(P,X)|$
and the total number will not exceed $4|\Occpar{\frac{1}{z'}-\epsilon}(P,X)|$.

\begin{theorem}\label{thm:3}
  For a weighted sequence of length $n$ over a constant-sized alphabet and parameter $\epsilon>0$,
  the Approximate Weighted Indexing problem can be solved in $\Oh(\frac{n}{\epsilon})$ space
  with $\Oh(m + |\Occpar{\frac{1}{z'}-\epsilon}(P,X)|)$-time queries.
  The construction time is $\Oh(\frac{n}{\epsilon} \log \frac{n}{\epsilon})$.
\end{theorem}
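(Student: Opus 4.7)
The plan is to instantiate the weighted index from Theorem~\ref{thm:1} with $z = \frac{1}{\epsilon}$. By Theorems~\ref{thm:specweight} and~\ref{thm:prp}, this gives a property suffix tree over a string of total length $N = \Oh(nz) = \Oh(\frac{n}{\epsilon})$, built in $\Oh(\frac{n}{\epsilon})$ time. First I would dispatch the trivial regime $\frac{1}{z'} < \epsilon$ separately, in which every position qualifies for output, so we simply return $1,\ldots,n$. Henceforth assume $\frac{1}{z'}\ge \epsilon$, so that $\ell = \floor{\tfrac{z}{z'}} \ge 1$.

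For a query pattern $P$, the preceding lemma reduces the task to reporting exactly those positions $i$ with $\Count_{\SS}(P,i) \ge \ell$: such positions are guaranteed to satisfy $\P_X(P,i)\ge \tfrac1{z'}-\epsilon$, while every position with $\P_X(P,i)\ge \tfrac1{z'}$ is necessarily of this form. Locating the node $v$ in the property suffix tree corresponding to $P$ takes $\Oh(m)$ time; the remaining task is to list all distinct position labels that appear at least $\ell$ times in the subtree of $v$.

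To do this efficiently I would augment the property suffix tree with the top-$k$ document retrieval structure of Navarro and Nekrich~\cite{DBLP:conf/soda/NavarroN12}, treating each distinct position label as a document identifier. Since there are $N = \Oh(\frac{n}{\epsilon})$ leaves, this auxiliary structure occupies $\Oh(\frac{n}{\epsilon})$ space and is built in $\Oh(\frac{n}{\epsilon}\log\frac{n}{\epsilon})$ time, which dominates and matches the claimed construction bound. At query time, I would perform exponential doubling, asking for the top $k = 1, 2, 4, \ldots$ documents in the subtree of $v$ and halting as soon as the last one reported has frequency strictly below $\ell$.

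The main obstacle, and the only delicate part, is the output-sensitive analysis of the doubling search. Because documents are returned in non-increasing order of frequency, once a document of frequency $<\ell$ is encountered we may cut off reporting, so correctness is immediate. For the time bound, if the doubling terminates at round $k$, then $k/2$ is at most the true output size $|\{i : \Count_{\SS}(P,i)\ge \ell\}|$, which is in turn bounded by $|\Occpar{\frac{1}{z'}-\epsilon}(P,X)|$ by the lemma; the geometric sum of costs therefore totals $\Oh(m + |\Occpar{\frac{1}{z'}-\epsilon}(P,X)|)$, as required. Combining the index of Theorem~\ref{thm:1} with this augmentation yields all three claimed bounds.
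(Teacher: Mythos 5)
Your proposal is correct and follows essentially the same route as the paper: build the weighted index of Theorem~\ref{thm:1} with $z=\frac{1}{\epsilon}$, use the lemma to reduce the query to reporting positions with $\Count_{\SS}(P,i)\ge\floor{\frac{z}{z'}}$, augment the property suffix tree with the Navarro--Nekrich top-$k$ document retrieval structure, and answer via a doubling search whose geometric cost is charged to $|\Occpar{\frac{1}{z'}-\epsilon}(P,X)|$. The handling of the trivial case $\frac{1}{z'}<\epsilon$ and the output-sensitive analysis match the paper's argument.
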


\section{Randomised Construction with Greater Space Usage}\label{sec:prob}
A symbol $X[i]$ of a weighted sequence $X$ can be interpreted as a probability distribution on $\Sigma$, 
and the whole sequence $X$ can be interpreted as a product distribution on strings of length $n$ over $\Sigma$.
In this setting, if $S \sim X$, i.e., $S$ is a random string with distribution $X$, 
then, for any position $i$ and string $P$, we have $\Pr[S[i\dd i+|P|-1]=P]=\P_X(P,i)$.
This interpretation can be used to provide a randomised construction of families $\SS$
of strings with properties equivalent to the weighted sequence $X$ in a certain sense, weaker than the one
used in Definition~\ref{def:estim}.

\begin{lemma}\label{lem:rand1}
There is a randomised algorithm which, given a weighted sequence $X$ of length $n$ and a threshold parameter $z$,
in $\Oh(nz\log(nz))$ time constructs a family $\SS$ of $k=\Oh(z \log(nz))$
strings $S_j$ with properties $\pi_j$ such that $\Count_\SS(P,i)>0$ if and only if $\P_X(P,i) \ge \frac1z$.
It succeeds with high probability ($1-\frac{1}{(nz)^c}$ for arbitrarily large constant $c$).
\end{lemma}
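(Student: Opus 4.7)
I would sample $k = \Theta(z\log(nz))$ independent random strings $S_1,\ldots,S_k$, each with $S_j \sim X$ (drawn letter by letter according to the distributions $p^{(X)}_i$), and equip each $S_j$ with the natural property
\[
\pi_j[i] \;=\; \max\bigl\{\ell \in \{i-1,\ldots,n\} : \P_X(S_j[i\dd \ell],\,i) \ge \tfrac{1}{z}\bigr\}.
\]
This defines a valid property: $\pi_j[i] \ge i-1$ is witnessed by the empty string, and $\pi_j$ is non-decreasing because advancing $i$ only removes a single factor $p^{(X)}_i(S_j[i]) \le 1$ from the product, so the maximal $\ell$ can only grow. With this definition, the $(\Rightarrow)$ direction of the claimed equivalence is free: any $j$ contributing to $\Count_\SS(P,i)$ satisfies $S_j[i\dd i+|P|-1]=P$ and $i+|P|-1\le \pi_j[i]$, so by the definition of $\pi_j$ we automatically get $\P_X(P,i)\ge\tfrac{1}{z}$.

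For the $(\Leftarrow)$ direction I would use a standard hitting argument. Fix a pair $(P,i)$ with $\P_X(P,i)\ge \tfrac{1}{z}$; the event $S_j[i\dd i+|P|-1]=P$ has probability exactly $\P_X(P,i)\ge\tfrac{1}{z}$, and whenever it occurs the definition of $\pi_j$ forces $\pi_j[i]\ge i+|P|-1$, so $j$ is counted. Hence the probability that $\Count_\SS(P,i)=0$ is at most $(1-\tfrac{1}{z})^{k}\le e^{-k/z}$, which is at most $(nz)^{-(c+2)}$ once $k=(c+2)z\log(nz)$. The key combinatorial input for the union bound is that at every position $i$ and every length $\ell$ the length-$\ell$ strings $P$ with $\P_X(P,i)\ge \tfrac{1}{z}$ correspond to disjoint events whose probabilities sum to at most $1$, so there are at most $\floor{z}$ of them; ranging over $i\le n$ and $\ell\le n$ gives at most $n^{2}z \le (nz)^{2}$ relevant pairs. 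The union bound then yields overall failure probability at most $(nz)^{-c}$.

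For the running time, each $S_j$ is sampled in $\Oh(n)$ time (constant alphabet; the log-probability model allows constant-time multiplications). The array $\pi_j$ can be computed in $\Oh(n)$ time via a two-pointer sweep exploiting the monotonicity of $\pi_j$: we maintain the product $\P_X(S_j[i\dd \ell],i)$ and advance the right endpoint while it stays $\ge\tfrac{1}{z}$, with $\Oh(1)$ amortised work per step. Summed over $k=\Oh(z\log(nz))$ samples this totals $\Oh(nz\log(nz))$.

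The main conceptual obstacle is the choice of $\pi_j$: a naive choice such as $\pi_j[i]=n$ breaks $(\Rightarrow)$ because random $S_j$'s generate many ``spurious'' low-probability substrings, whereas a too-restrictive choice could break monotonicity or linear-time computability. The definition above is the unique natural one making $(\Rightarrow)$ a tautology while preserving monotonicity (since advancing $i$ only divides out a factor $\le 1$) and supporting the two-pointer construction; once it is in place, the remainder is the standard coupon-collector-style analysis sketched above.
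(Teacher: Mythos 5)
Your proposal is correct and follows essentially the same route as the paper's proof: sample $k=\Theta(z\log(nz))$ strings from the distribution $X$, define $\pi_j[i]$ as the longest prefix of $S_j[i\dd n]$ with matching probability at least $\tfrac1z$ (making the forward direction automatic), and apply a Chernoff-type bound together with a union bound over the at most $n^2z$ pairs $(P,i)$ with $\P_X(P,i)\ge\tfrac1z$. The only differences are that you spell out details the paper leaves implicit (monotonicity of $\pi_j$, the direct counting argument for the union bound, and the two-pointer computation of $\pi_j$), which is fine.
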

\begin{proof}
We randomly sample $k=\ceil{(c+2)z\ln(nz)}$ strings $S_1,\ldots,S_k$. Formally, these are independent random variables with distribution $X$.
The properties $\pi_j$ are specified so that $S_j[i\dd \pi_{j}[i]]$ is the longest prefix of $S_{j}[i\dd n]$
with $\P_X(S_j[i\dd \pi_{j}[i]], i)\ge \frac1z$.

This way, $\Count_\SS(P,i)>0$ implies $\P_X(P,i) \ge \frac1z$.
On the other hand, if $\P_X(P,i) \ge \frac1z$, then, since
$\Pr[S_j[i\dd i+|P|-1] \ne P] = 1 - \P_X(P,i)$, we have:
$$\Pr[\Count_\SS(P,i) = 0] =(1 - \P_X(P,i))^k\le e^{-k\P_X(P,i)}\le e^{-(c+2)\ln(nz)}=\tfrac{1}{(nz)^{c+2}}.$$
There are at most $n^2z$ pairs $(P,i)$ satisfying $\P_X(P,i)\ge \frac1z$
(this is the bound for the sum of lengths of all strings in the sets $\MS_i$ from Section~\ref{sec:z}).
Consequently, the resulting family has the required property with 
probability at least $1-\frac{n^2z}{(nz)^{c+2}}\ge 1-\frac{1}{(nz)^c}$.
\end{proof}

We can directly use the same methods as in Section~\ref{sec:appl} to construct a weighted index
from the family of strings constructed in Lemma~\ref{lem:rand1}.
The space complexity of the resulting index is worse than the one in Theorem~\ref{thm:1} by a factor of $\log(nz)$
and the construction is randomised.

\begin{corollary}
There is a data structure of size $\Oh(nz\log(nz))$ for the Weighted Indexing problem which answers queries in optimal time.
It can be constructed using a randomised $\Oh(nz\log(nz))$-time algorithm which returns
a valid weighted index with high probability.
\end{corollary}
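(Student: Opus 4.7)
The plan is to combine Lemma~\ref{lem:rand1} with the weighted-index construction of Section~\ref{sec:appl} as a black box. First, I would invoke Lemma~\ref{lem:rand1} to obtain, in $\Oh(nz\log(nz))$ time, a family $\SS = (S_j,\pi_j)_{j=1}^{k}$ with $k = \Oh(z\log(nz))$ strings of length $n$ that, with high probability, satisfies $\Count_\SS(P,i) > 0$ iff $\P_X(P,i) \ge \tfrac1z$. Set $N = nk = \Oh(nz\log(nz))$ and condition on the high-probability event; otherwise the algorithm can be rerun.

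Next, I would reuse the construction of Section~\ref{sec:appl} verbatim: concatenate $S_1,\dots,S_k$ into a single string $S$ of length $N$, shift each $\pi_j$ accordingly to obtain a joint property array $\pi$ of length $N$, and build the property suffix tree of $(S,\pi)$ using Theorem~\ref{thm:prp}. This takes $\Oh(N) = \Oh(nz\log(nz))$ time and space. I would then augment it with the colour set size structure of~\cite{DBLP:conf/cpm/Hui92} and the coloured range listing structure of~\cite{Muthukrishnan2002}, where the ``colour'' of a terminal is the position inside the original weighted sequence $X$ (recovered from the shifted index). These augmentations add only linear overhead, preserving the $\Oh(nz\log(nz))$ bounds on size and construction time.

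For queries, I would locate the node $v$ in the property suffix tree corresponding to $P$ in $\Oh(m)$ time by ordinary downward traversal. By the high-probability guarantee from Lemma~\ref{lem:rand1}, a position $i$ lies in $\Occ(P,X)$ iff there exists at least one $j$ such that $i$ is labelled in a terminal in the subtree of $v$ and $S_j[i\dd\pi_j[i]]$ has $P$ as a prefix, which is exactly the condition captured by the shifted labels. Decision queries reduce to checking existence of $v$; counting and reporting queries reduce to counting/listing distinct colours in the subtree of $v$, which the augmented structures answer in $\Oh(m)$ and $\Oh(m+|\Occ(P,X)|)$ time respectively, as shown in \cite{DBLP:conf/cpm/BartonKPR16}.

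The only mildly subtle point is that we no longer have an exact multiplicity correspondence as in a $z$-estimation, only a zero/non-zero correspondence; however, both the colour set size and coloured range listing primitives deduplicate by colour, so this suffices for optimal-time decision, counting, and reporting. I expect no real obstacle beyond this sanity check, because the heavy machinery (randomised sampling bound, property suffix tree, coloured range structures) has already been established; the result is essentially a packaging of Lemma~\ref{lem:rand1} through the same pipeline used to prove Theorem~\ref{thm:1}.
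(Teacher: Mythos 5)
Your proposal matches the paper's construction: it plugs the family from Lemma~\ref{lem:rand1} into the Section~\ref{sec:appl} pipeline (concatenation, property suffix tree, colour set size and coloured range listing), and your observation that only the zero/non-zero correspondence is needed for decision, counting, and reporting of distinct positions is exactly the right sanity check. The only nitpick is the aside about rerunning on failure—the failure event is not efficiently detectable, but the corollary only claims a Monte Carlo (with-high-probability) guarantee, so this does not affect correctness.
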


The same type of construction can be used to obtain an approximate weighted index.
To this end, we need a stronger equivalence property of a string family
and a greater number of sampled strings to satisfy this property.

\begin{lemma}
There is a randomised algorithm which, given a weighted sequence $X$ of length $n$ and a parameter $\epsilon$,
in $\Oh(\frac{n}{\epsilon^2}\log(\frac{n}{\epsilon}))$ time constructs a family $\SS$ of $k=\Oh(\frac{1}{\epsilon^2}\log(\frac{n}{\epsilon}))$
strings $S_j$ with properties $\pi_j$ such that $|\P_X(P,i)-\frac1k\Count_\SS(P,i)|<\epsilon$ for every position $i$ and string $P$.
It succeeds with high probability ($1-(\frac{\epsilon}{n})^c$ for arbitrarily large constant $c$).
\end{lemma}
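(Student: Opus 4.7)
The plan is to mimic the sampling approach of \cref{lem:rand1}, but with a larger sample size $k = \Theta(\epsilon^{-2}\log(n/\epsilon))$ and properties $\pi_j$ tuned to the threshold $\epsilon$ so that only ``heavy'' patterns---those with $\P_X(P,i) \ge \epsilon$---contribute to $\Count_\SS$. I sample $S_1,\ldots,S_k$ independently from the product distribution induced by $X$, and for each $j$ and $i$ I set $\pi_j[i]$ to the largest $r \in \{i-1,\ldots,n\}$ with $\P_X(S_j[i\dd r], i) \ge \epsilon$. The table $\pi_j$ is non-decreasing since $\P_X(S_j[i+1\dd r], i+1) \ge \P_X(S_j[i\dd r], i)$, so it is a valid property, and it can be computed in $\Oh(n)$ time per $j$ by a two-pointer scan using log-probabilities (as the paper assumes), yielding total construction time $\Oh(\tfrac{n}{\epsilon^2}\log(n/\epsilon))$.

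The crux is that with this choice, $\Count_\SS(P,i)$ collapses for non-empty $P$ to a pure string-match count restricted to heavy patterns. Indeed, if $i \in \OOcc_{\pi_j}(P,S_j)$ for a non-empty $P$, then $P$ is a prefix of $S_j[i\dd\pi_j[i]]$, so $\P_X(P,i) \ge \P_X(S_j[i\dd\pi_j[i]], i) \ge \epsilon$; conversely, whenever $S_j[i\dd i+|P|-1] = P$ and $\P_X(P,i) \ge \epsilon$, the defining maximality of $\pi_j[i]$ forces $\pi_j[i] \ge i+|P|-1$. Therefore, for non-empty $P$ with $\P_X(P,i) < \epsilon$ we get $\Count_\SS(P,i) = 0$ deterministically and the required bound holds trivially; and for $P$ with $\P_X(P,i) \ge \epsilon$, $\Count_\SS(P,i) = \sum_{j=1}^k \mathbb{1}[S_j[i\dd i+|P|-1] = P]$ is a sum of $k$ iid Bernoullis with mean $\P_X(P,i)$, to which Hoeffding's inequality yields $\Pr[|\tfrac{1}{k}\Count_\SS(P,i) - \P_X(P,i)| \ge \epsilon] \le 2\exp(-2k\epsilon^2)$.

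A union bound finishes the proof, provided the number of heavy pairs is polynomial. For each fixed position $i$ and length $m$, the constraint $\sum_{|P|=m}\P_X(P,i) \le 1$ admits at most $1/\epsilon$ patterns $P$ with $\P_X(P,i)\ge \epsilon$; summing over $i$ and $m \le n$ leaves at most $n^2/\epsilon$ heavy pairs. Choosing the hidden constant in $k$ large enough makes the total failure probability at most $(\epsilon/n)^c$ for any prescribed $c$. The main obstacle is precisely this counting step: a naive union bound over all $P$ would fail because there can be exponentially many patterns in total; anchoring $\pi_j$ to the threshold $\epsilon$ sidesteps this by making the count vanish deterministically for every light pattern, leaving only a polynomial collection of heavy pairs to which Hoeffding must be applied.
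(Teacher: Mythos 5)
Your proposal is correct and follows essentially the same route as the paper's proof: sample $k=\Theta(\epsilon^{-2}\log(n/\epsilon))$ strings from the distribution induced by $X$, define $\pi_j$ by the threshold $\epsilon$ so that light patterns contribute zero deterministically, apply Hoeffding's inequality to the binomial count for each heavy pair, and union-bound over the at most $n^2/\epsilon$ heavy pairs. The extra details you supply (monotonicity of $\pi_j$, the $1/\epsilon$-per-position-and-length counting argument) are consistent with, and merely make explicit, what the paper leaves implicit.
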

\begin{proof}
We randomly sample $k=\ceil{(c+2)\frac1{\epsilon^2}\ln\frac{n}{\epsilon}}$ strings $S_1,\ldots,S_k$.
The properties $\pi_j$ satisfy that $S_j[i\dd \pi_{j}[i]]$ is the longest prefix of $S_{j}[i\dd n]$
such that $\P_X(S_j[i\dd \pi_{j}[i]], i)\ge \epsilon$.

Observe that if $\P_X(P,i)< \epsilon$, then $\Count_{\SS}(P,i)=0$.
On the other hand, if $\P_X(P,i)\ge \epsilon$,
then $\Count_{\SS}(P,i) \sim \mathrm{Bin}(k, \P_X(P,i))$.
Consequently, Hoeffding's inequality \cite{doi:10.1080/01621459.1963.10500830} implies
$$\Pr[| \P_X(P,i)- \tfrac{1}{k}\Count_{\SS}(P,i)| > \epsilon] \le 2e^{-\epsilon^2 k}=2e^{-(c+2)\ln\tfrac{n}{\epsilon}}\le 2(\tfrac{\epsilon}{n})^{c+2}.$$
There are at most $\frac{n^2}{\epsilon}$ such pairs $(P,i)$, so the family $\SS$ satisfies
the required condition with probability at least
$1-(\frac{\epsilon}{n})^c$, as claimed.
\end{proof}

We can use this family of strings to construct an approximate weighted index
using top-$k$ document retrieval just as in Section~\ref{sec:appl2}.
We arrive at the following construction with space complexity greater than the one from Theorem~\ref{thm:3}
by a factor of $\frac{1}{\epsilon} \log \frac{n}{\epsilon}$ (and has a randomised construction).

\begin{corollary}
There is a data structure of size $\Oh(\frac{n}{\epsilon^2}\log\frac{n}{\epsilon})$ which solves the Approximate Weighted Indexing problem with $\Oh(m + |\Occpar{\frac{1}{z'}-\epsilon}(P,X)|)$-time queries.
It can be constructed using a randomised $\Oh(\frac{n}{\epsilon^2}\log^2\frac{n}{\epsilon})$-time algorithm which returns
a valid approximate weighted index with high probability.
\end{corollary}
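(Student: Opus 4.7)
The plan is to mimic Section~\ref{sec:appl2}, replacing the deterministic $z$-estimation of Theorem~\ref{thm:specweight} with the randomised family from the previous lemma. First I would invoke that lemma with accuracy parameter $\epsilon/2$ (a constant rescaling which does not affect asymptotic bounds) to obtain, in $\Oh(\frac{n}{\epsilon^2}\log\frac{n}{\epsilon})$ time, a family $\SS=(S_j,\pi_j)_{j=1}^k$ with $k=\Oh(\frac{1}{\epsilon^2}\log\frac{n}{\epsilon})$ satisfying $|\P_X(P,i)-\tfrac{1}{k}\Count_\SS(P,i)|<\epsilon/2$ simultaneously for all $(P,i)$ with high probability. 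The handling of the trivial regime $\tfrac{1}{z'}<\epsilon$ is copied verbatim from Section~\ref{sec:appl2}, so we may assume $\tfrac{1}{z'}\ge\epsilon$ for the rest of the argument.

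Next, as in Section~\ref{sec:appl}, I would concatenate the $S_j$'s into a single string $S$ of length $N=nk=\Oh(\frac{n}{\epsilon^2}\log\frac{n}{\epsilon})$ with an appropriately shifted property array $\pi$, and build the property suffix tree of $(S,\pi)$ in $\Oh(N)$ time via Theorem~\ref{thm:prp}. Each terminal is labelled with the position within its original $S_j$, turning distinct positions in $X$ into document identifiers drawn from $\{1,\dots,n\}$. I would then plug in the top-$k$ document retrieval structure of~\cite{DBLP:conf/soda/NavarroN12} on this tree; its preprocessing runs in $\Oh(N\log N)=\Oh(\frac{n}{\epsilon^2}\log^2\frac{n}{\epsilon})$ time and occupies $\Oh(N)$ space, matching the claimed bounds for size and construction time.

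A query with pattern $P$ and threshold $\tfrac{1}{z'}$ is answered by locating $P$ in the property suffix tree in $\Oh(m)$ time (reporting nothing if the locus does not exist) and then setting the cutoff $\ell=\lceil k(\tfrac{1}{z'}-\tfrac{\epsilon}{2})\rceil$. The doubling strategy from Section~\ref{sec:appl2}, applied to the top-$k$ retrieval structure, enumerates all documents whose frequency in the subtree of the locus is at least $\ell$, i.e.\ all $i$ with $\Count_\SS(P,i)\ge\ell$. The approximation guarantee gives both directions: if $\P_X(P,i)\ge\tfrac{1}{z'}$ then $\tfrac{1}{k}\Count_\SS(P,i)>\P_X(P,i)-\tfrac{\epsilon}{2}\ge\tfrac{1}{z'}-\tfrac{\epsilon}{2}$, so $i$ is reported; conversely, if $i$ is reported, then $\P_X(P,i)>\tfrac{1}{k}\Count_\SS(P,i)-\tfrac{\epsilon}{2}\ge\tfrac{1}{z'}-\epsilon$, so only legitimate $\epsilon$-approximate occurrences are produced. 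The doubling search therefore inspects $\Oh(|\Occpar{\frac{1}{z'}-\epsilon}(P,X)|)$ documents in total, yielding the stated query time.

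Rather than a real obstacle, the delicate point is choosing the pair (accuracy parameter fed into the preceding lemma, cutoff $\ell$) so the two slacks add up to exactly $\epsilon$; splitting the budget as $\epsilon/2+\epsilon/2$ is the cleanest choice. All randomised failure is absorbed by the high-probability event of the preceding lemma, and the analysis of the doubling search carries over unchanged from Section~\ref{sec:appl2}, since it depends only on the fact that positions reported at threshold $\ell$ form a subset of $\Occpar{\frac{1}{z'}-\epsilon}(P,X)$.
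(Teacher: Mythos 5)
Your proposal is correct and follows essentially the same route as the paper, which merely states that the randomised family is combined with the property suffix tree and top-$k$ document retrieval ``just as in Section~\ref{sec:appl2}''. The only addition is that you make explicit the accuracy/cutoff bookkeeping (invoking the lemma with parameter $\epsilon/2$ and thresholding at $\ell=\lceil k(\tfrac{1}{z'}-\tfrac{\epsilon}{2})\rceil$), a detail the paper glosses over but which is needed so that the two slacks sum to $\epsilon$; your verification of the size, construction-time, and query-time bounds matches the paper's intended argument.
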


\section{Conclusions}\label{sec:concl}
In this article we present an efficient index for Weighted Pattern Matching
along with new combinatorial insights into the nature of weighted sequences.
We have produced an implementation of the index (see \texttt{https://bitbucket.org/kociumaka/weighted\_index}) that we have validated for correctness and efficiency against
known weighted pattern matching algorithms \cite{DBLP:conf/isaac/KociumakaPR16,DBLP:journals/corr/BartonLP15,DBLP:conf/cocoa/BartonLP16}.
Our implementation supports decision, counting, and reporting variants of queries; however, only decision operations were implemented in worst-case optimal time.

Let us mention that our results can be extended to integer alphabets $\Sigma$, i.e., $\Sigma \subseteq \{1,\dots,n^{\Oh(1)}\}$, without influencing the space
and construction time.
We have omitted the description of this extension and preferred to focus on the basic case of a constant-sized alphabet
that is also most relevant in practice.

Finally, our ideas can be used to improve the solution for the Generalised Weighted Indexing problem from \cite{DBLP:conf/edbt/BiswasPTS16}.
They use a notion of \emph{special weighted sequences}
in which each position contains at most one letter with a positive probability.
(In this case the assumption that the probabilities sum up to 1 at each position is waived.)
In \cite{DBLP:conf/edbt/BiswasPTS16} the input weighted sequence is transformed using the reduction of~\cite{amir_weighted_property_matching_j}
into a special weighted sequence of length $\Oh(nz^2 \log z)$ that preserves the set of maximal solid factors.
In the special weighted sequence, a query for a pattern $P$ under the probability threshold $\frac{1}{z^{\prime}}$ is answered
in $\Oh(m + m \cdot |\Occpar{\frac{1}{z^{\prime}}}(P,X)|)$ time.

Our $z$-estimation $\SS$ can be transformed into a special weighted sequence of length $\Oh(nz)$ that also preserves the
set of solid factors.
We simply concatenate the strings, taking the letter probabilities from the respective positions in $X$, and split the concatenated parts
with a zero-probability position.
This gives a more space-efficient reduction that can be used in the data structure of \cite{DBLP:conf/edbt/BiswasPTS16}.

\begin{corollary}\label{thm:2}
  For a weighted sequence of length $n$ over an integer alphabet, the Generalised Weighted Indexing problem
  can be solved with $\Oh(m + m \cdot |\Occpar{\frac{1}{z'}}(P,X)|)$-time queries with an index of size $\Oh(nz)$.
\end{corollary}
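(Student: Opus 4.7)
The plan is to use our $z$-estimation of $X$ as a compact reduction to a special weighted sequence, and then plug it into the generalised weighted index of Biswas et al.~\cite{DBLP:conf/edbt/BiswasPTS16}; this replaces the length-$\Oh(nz^2\log z)$ reduction of Amir et al.~\cite{amir_weighted_property_matching_j} originally used there. Concretely, I would first invoke \cref{thm:specweight} to obtain a $z$-estimation $\SS=(S_j,\pi_j)_{j=1}^{\floor{z}}$. From $\SS$ I would build a special weighted sequence $Y$ of length $\floor{z}(n+1)-1=\Oh(nz)$ as follows: place $S_j$ at positions $(j-1)(n+1)+1,\ldots,(j-1)(n+1)+n$ of $Y$, setting the only positive-probability letter at position $(j-1)(n+1)+i$ to be $S_j[i]$ with probability $p^{(X)}_i(S_j[i])$, and between consecutive $S_j$'s insert a separator position at which every letter has probability $0$.

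The correctness claim I would establish is that $\Occpar{\frac{1}{z'}}(P,Y)$ projects, via the map $(j-1)(n+1)+i\mapsto i$, onto $\Occpar{\frac{1}{z'}}(P,X)$ for every $z'\le z$. The separator positions force any positive-probability match of $P$ in $Y$ to lie entirely inside some $S_j$, at some offset $(j-1)(n+1)+i$, with matching probability $\prod_{k=1}^m p^{(X)}_{i+k-1}(P[k])=\P_X(P,i)$; so every reported $Y$-position projects to a genuine $X$-position with probability above the threshold. Conversely, whenever $\P_X(P,i)\ge \frac{1}{z'}\ge \frac{1}{z}$, \cref{def:estim} gives $\Count_\SS(P,i)=\floor{\P_X(P,i)z}\ge 1$, so at least one $S_j$ carries $P$ at position $i$ and produces a witness match in $Y$.

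Finally, I would feed $Y$ together with the position projection into the data structure of Biswas et al.~\cite{DBLP:conf/edbt/BiswasPTS16}. Their query procedure on a special weighted sequence answers a threshold-$\frac{1}{z'}$ query in $\Oh(m+m\cdot|\Occpar{\frac{1}{z'}}(P,X)|)$ time, and the overall space is linear in $|Y|=\Oh(nz)$, delivering the stated bounds.

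The main obstacle is that a single position $i$ of $X$ may be represented by up to $\Count_\SS(P,i)\le\floor{z}$ distinct matches in $Y$, one per $S_j$ containing $P$ at position $i$, so a naive enumeration of matches in $Y$ would inflate the output-sensitive term by up to a factor of $\floor{z}$. I would handle this by deduplicating candidates by their projected $X$-position during query processing, piggybacking on the $\Oh(m)$ per-candidate verification step already performed by the algorithm of \cite{DBLP:conf/edbt/BiswasPTS16}, so that the reported term remains $m\cdot|\Occpar{\frac{1}{z'}}(P,X)|$ rather than $m\cdot|\Occpar{\frac{1}{z'}}(P,Y)|$.
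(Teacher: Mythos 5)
Your construction is essentially the paper's own proof: the paper likewise concatenates the $\floor{z}$ strings of the $z$-estimation, assigns each position the corresponding letter probability from $X$, separates consecutive strings with a zero-probability position, and plugs the resulting length-$\Oh(nz)$ special weighted sequence into the generalised index of Biswas et al.\ in place of the $\Oh(nz^2\log z)$ reduction of Amir et al. Your closing concern about several $Y$-occurrences projecting to one $X$-position is not treated in the paper either---it uses the Biswas et al.\ structure as a black box whose query time is already stated in terms of $|\Occpar{\frac{1}{z'}}(P,X)|$---so your deduplication remark is an extra precaution rather than a divergence.
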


\subsection*{Acknowledgement}
We thank an anonymous referee of the previous version of the paper for the idea of a simple randomised construction.
We also thank Tsvi Kopelowitz for bringing our attention to the multitude of existing solutions to the Property Indexing problem.

\bibliographystyle{plain}
\bibliography{better_weighted_index}

\end{document}